\documentclass[onefignum,onetabnum]{siamart171218} % siam
%\documentclass[12pt]{article} % sigsam

% SIAM Shared Information Template
% This is information that is shared between the main document and any
% supplement. If no supplement is required, then this information can
% be included directly in the main document.

% Packages and macros go here
\usepackage{amsfonts}
\usepackage{graphicx}
\usepackage{epstopdf}
\usepackage{algorithmic}
\usepackage{amssymb}
\usepackage{mathtools}
\usepackage{color}
\usepackage[shortlabels]{enumitem}
\usepackage{multirow}
\usepackage{xfrac}
\usepackage{hyperref}
\usepackage{arydshln}
\usepackage[super]{nth}
\usepackage{cite}
\usepackage{icomma}

\allowdisplaybreaks

\DeclarePairedDelimiter\floor{\lfloor}{\rfloor}

\ifpdf
  \DeclareGraphicsExtensions{.eps,.pdf,.png,.jpg}
\else
  \DeclareGraphicsExtensions{.eps}
\fi

% Add a serial/Oxford comma by default.

% Change proof symbol to a natural sign.

% Used for creating new theorem and remark environments
\newsiamremark{remark}{Remark}
\newsiamthm{conjecture}{Conjecture}

% Sets running headers as well as PDF title and authors
\headers{Bohemian Upper Hessenberg Toeplitz Matrices}{E. Y. S. Chan, et al.}

% Title. If the supplement option is on, then "Supplementary Material"
% is automatically inserted before the title.
\title{Bohemian Upper Hessenberg Toeplitz Matrices}

% Authors: full names plus addresses.
\author{
  Eunice Y. S. Chan\thanks{Department of Applied Mathematics, Western University
    (\email{echan295@uwo.ca}, 
    \email{rcorless@uwo.ca},
    \email{sthornt7@uwo.ca}).}
  \and
  Robert M. Corless\footnotemark[1]
  \and
  Laureano Gonzalez-Vega\thanks{Departamento de Matematicas, Estadistica y Computacion, Universidad de Cantabria
  (\email{laureano.gonzalez@unican.es}).}
  \and
  J.~Rafael Sendra\thanks{Research Group ASYNACS, Departamento de F{\'i}sica y Matem\'{a}ticas, University of Alcal{\'a}
  (\email{rafael.sendra@uah.es}).}
  \and
  Juana Sendra\thanks{Universidad Polit{\'{e}}cnica de Madrid
  (\email{jsendra@etsist.upm.es}).}
  \and
  Steven E. Thornton\footnotemark[1]
}

\usepackage{amsopn}

%%% Local Variables: 
%%% mode:latex
%%% TeX-master: "ex_article"
%%% End: 

% \include{macros_sigsam}

\begin{document}

\maketitle

% ============================================================================ %
% Abstract                                                                     %
% ============================================================================ %
\begin{abstract}
We look at Bohemian matrices, specifically those with entries from $\{-1, 0, {+1}\}$. More, we specialize the matrices to be upper Hessenberg, with subdiagonal entries $1$. Even more, we consider Toeplitz matrices of this kind. Many properties remain after these specializations, some of which surprised us.
Focusing on only those matrices whose characteristic polynomials have maximal height allows us to explicitly identify these polynomials and give a lower bound on their height. This bound is exponential in the order of the matrix.
\end{abstract}

% ============================================================================ %
% Introduction                                                                 %
% ============================================================================ %
\section{Introduction}
A matrix family is called \textbf{Bohemian} if its entries come from a fixed finite discrete (and hence bounded) set, usually integers. The name is a mnemonic for \textbf{Bo}unded \textbf{He}ight \textbf{M}atrix of \textbf{I}ntegers. Such families arise in many applications (e.g. compressed sensing) and the properties of matrices selected ``at random'' from such families are of practical and mathematical interest. An overview of some of our original interest in Bohemian matrices can be found in~\cite{corless2017bohemian}.

We began our study by considering Bohemian upper Hessenberg matrices. We proved two recursive formulae for the characteristic polynomials of upper Hessenberg matrices (see~\cite{chan2018BUH} for details). During the course of our computations, we encountered ``maximal polynomial height'' characteristic polynomials when the matrices were not only upper Hessenberg, but Toeplitz ($h_{i,j}$ constant along diagonals $j-i = k$). Further restrictions to this class allowed identification of key results including explicit formulae for the characteristic polynomials of maximal height, which motivates this paper. In what follows, we lay out definitions and prove several facts of interest about characteristic polynomials and their respective height for these families.

In Figure~\ref{fig:UHT_14}, we see all the eigenvalues of all $14 \times 14$ upper Hessenberg Toeplitz matrices with subdiagonal entries equal to $1$ and all other entries from the population $\{-1, 0, {+1}\}$. We see a wide irregularly hexagonal shape. In contrast, upper Hessenberg Bohemian matrices that are \textsl{not} Toeplitz generate an irregular octagonal shape (see~\cite{chan2018BUH}).  More, the density of eigenvalues (here, a darker colour indicates higher density of eigenvalues) is quite irregular, with high-density flecks dispersed throughout. In some ways the picture is reminiscent of seeds in a cotton ball, if the cotton ball has been flattened. The conjugate symmetry and $z \to -z$ symmetry are evident; to save space, we could have plotted only the first quadrant, but for completeness have included all four. This helps to show that there is a slightly lower density of eigenvalues near (not on) the real line.  The density of eigenvalues actually \textsl{on} the real line is quite high, although this is not evident from the picture.

The one thing that is easily explained about that figure is the wide flat top (and bottom). To do this, consider eigenvalues of Bohemian Upper Hessenberg Toeplitz matrices with \textsl{zero diagonal}.  
Figure~\ref{fig:UHT_14_0_Diag} is a picture of the set of eigenvalues of all $14 \times 14$ upper Hessenberg Toeplitz matrices with subdiagonal entries equal to 1, diagonal entries equal to 0, and all other entries from the population $\{-1, 0, {+1}\}$. Here, we also see a hexagonal shape, but this time, it is not as wide. The matrices $B$ giving rise to Figure~\ref{fig:UHT_14} are exactly the matrices $B=A$, $B = A + I$ and $B = A-I$
where the matrices $A$ give rise to Figure~\ref{fig:UHT_14_0_Diag}; thus the eigenvalues of each $A$ occur three times, once with zero shift, once with $-1$ shift, and once with $1$ shift.  That is, Figure~\ref{fig:UHT_14} is simply three copies of Figure~\ref{fig:UHT_14_0_Diag} placed side by side, giving the appearance of a flat (or mostly flat) top and bottom.

In Figure~\ref{fig:UHT_14_0_Diag} we see more clearly that the high-density ``flecks'' occur moderately near to the edge of the eigenvalue inclusion region.  We have no explanation for this. We also see that the eigenvalues fit into a rough diamond shape; one wonders if the eigenvalues $\lambda = x + iy$ fit into a region of shape $|x|+|y| \le O(\sqrt{n})$.  Again, we have no explanation for this (or even much data; we do not know if this guess is even correct experimentally).

In this paper we seek to explain some other features of these pictures, and to learn more about Bohemian upper Hessenberg Toeplitz matrices. We provide supplementary material through a git repository available at \url{https://github.com/BohemianMatrices/Bohemian_Upper_Hessenberg_Toeplitz_Matrices}. This repository provides all code and data used to generate the results, figures, and tables in this paper.

\begin{figure}
    \centering
    \includegraphics[width=\textwidth]{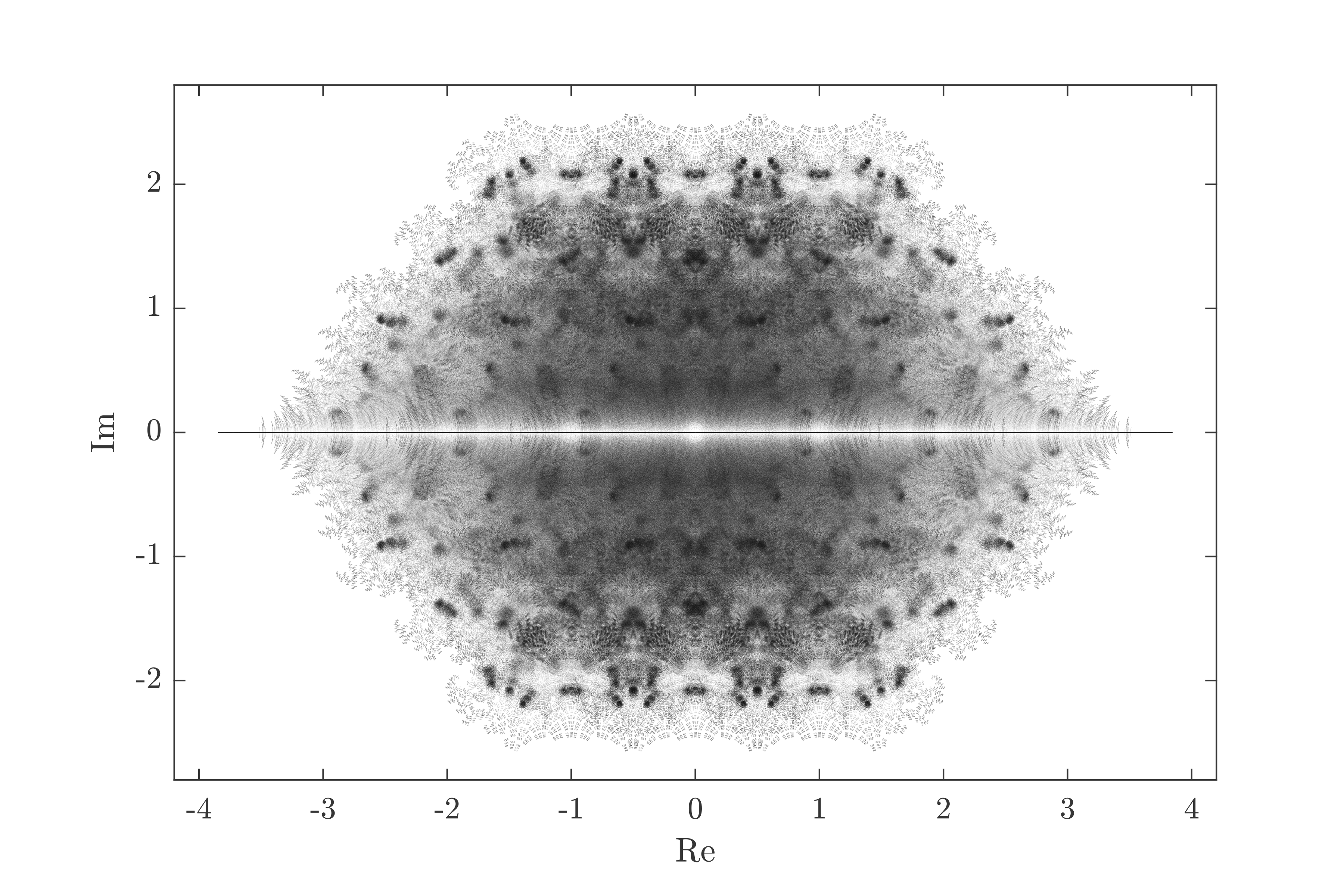}
    \caption{The set of eigenvalues of all $14 \times 14$ upper Hessenberg Toeplitz matrices with subdiagonal entries equal to $1$, and all other entries from the set $\{-1, 0, {+1}\}$. A more detailed image can be found at \url{assets.bohemianmatrices.com/gallery/UHT_14x14.png}}
\label{fig:UHT_14}
\end{figure}

\begin{figure}
    \centering
    \includegraphics[width=\textwidth]{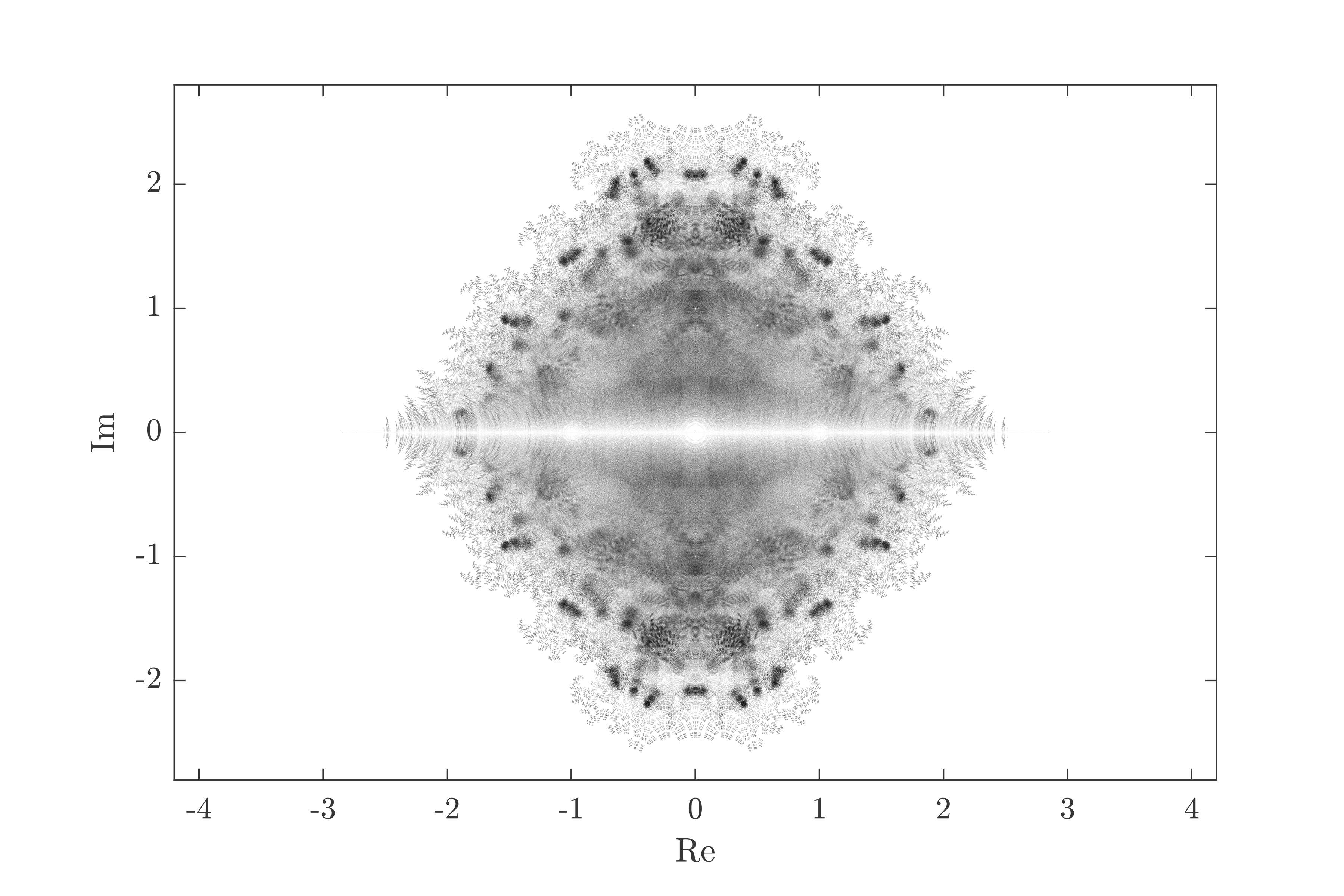}
    \caption{The set of eigenvalues of all $14 \times 14$ upper Hessenberg Toeplitz 
    matrices subdiagonal entries equal to $1$, diagonal entries equal to $0$, and all other entries from the set $\{-1, 0, {+1}\}$. A more detailed image can be found at \url{assets.bohemianmatrices.com/gallery/UHT_0_Diag_14x14.png}}
\label{fig:UHT_14_0_Diag}
\end{figure}

% ============================================================================ %
% Prior Work                                                                   %
% ============================================================================ %
\section{Prior Work}

In our sister paper ``Bohemian Upper Hessenberg Matrices"~\cite{chan2018BUH}, we introduced the following theorems, definitions, remarks, and propositions for upper Hessenberg Bohemian matrices of the form
\begin{equation}
    \mathbf{H}_n =
    \renewcommand{\arraystretch}{1.3}
    \begin{bmatrix}
        h_{1,1} & h_{1,2}    & h_{1,3}     & \cdots & h_{1,n}\\
        s   & h_{2,2}    & h_{2,3}    & \cdots & h_{2,n}\\
        0     & s    & h_{3,3}    & \cdots & h_{3,n}\\
        \vdots & \ddots & \ddots & \ddots & \vdots\\
        0      & \cdots & 0       & s    & h_{n,n}
    \end{bmatrix}
\end{equation}
with characteristic polynomial $Q_n(z) \equiv \det(z\mathbf{I} - \mathbf{H}_n)$.

\begin{definition}
    The set of all $n \times n$ Bohemian upper Hessenberg matrices with upper triangle population $P$ and subdiagonal population from a discrete set of roots of unity, say $s\in \{e^{i\theta_{k}}\}$ where $\{\theta_{k}\}$ is some finite set of angles, is called $\mathcal{H}_{\{\theta_{k}\}}^{n\times n}(P)$. In particular, $\mathcal{H}_{\{0\}}^{n \times n}(P)$ is the set of all $n \times n$ Bohemian upper Hessenberg matrices with upper triangle entries from $P$ and subdiagonal entries equal to $1$ and $\mathcal{H}_{\{\pi\}}^{n \times n}(P)$ is when the subdiagonals entries are $-1$.
\end{definition}

\begin{theorem}
    \label{thm:charPolyRec1}
    \begin{equation}
        \label{eqn:thm1}
        Q_n(z) = zQ_{n-1}(z) - \sum_{k=1}^{n} s^{k-1} h_{n-k+1,n} Q_{n-k}(z)
    \end{equation}
    with the convention that $Q_0(z) = 1$ ($\mathbf{H}_0 = [\,]$, the empty matrix).
\end{theorem}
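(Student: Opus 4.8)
The plan is to prove the recurrence by cofactor expansion of the determinant $Q_n(z) = \det(z\mathbf{I} - \mathbf{H}_n)$. I would expand along the last column of $z\mathbf{I} - \mathbf{H}_n$, because that column contains the entries $-h_{1,n}, -h_{2,n}, \ldots, z - h_{n,n}$ in rows $1$ through $n$, and these are exactly the quantities $h_{n-k+1,n}$ that appear in the claimed formula.

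First I would write down the entries of the last column explicitly: the entry in row $n-k+1$ (for $k = 1, \ldots, n$) is $-h_{n-k+1,n}$, except for the bottom entry in row $n$ which is $z - h_{n,n}$. Expanding the determinant along this last column gives a signed sum of the entries times their minors. The key structural observation, which I would establish next, is the shape of each minor: deleting row $n-k+1$ and column $n$ from the upper Hessenberg matrix $z\mathbf{I} - \mathbf{H}_n$ leaves a block-triangular matrix. The rows below the deleted row still carry the subdiagonal entries $s$ that now sit on the main diagonal of a lower-left block, so that block is upper triangular with $k-1$ copies of $-s$ on its diagonal (coming from the subdiagonal entries of the original Hessenberg matrix), contributing a factor $(-s)^{k-1}$, while the complementary block is exactly $z\mathbf{I} - \mathbf{H}_{n-k}$, contributing $Q_{n-k}(z)$.

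Then I would carefully assemble the signs. The cofactor sign for the entry in row $n-k+1$, column $n$ is $(-1)^{(n-k+1)+n} = (-1)^{1-k} = (-1)^{k-1}$. Combining the cofactor sign $(-1)^{k-1}$, the factor $(-s)^{k-1} = (-1)^{k-1} s^{k-1}$ from the triangular block, and the entry $-h_{n-k+1,n}$, the product of the two sign factors is $(-1)^{2(k-1)} = 1$, leaving $-s^{k-1} h_{n-k+1,n} Q_{n-k}(z)$ for each $k$. The $k=1$ term, involving the bottom entry $z - h_{n,n}$, must be handled together: the $z$ part yields $z Q_{n-1}(z)$ and the $-h_{n,n}$ part folds into the sum as the $k=1$ summand, consistent with $s^{0} = 1$ and $Q_{n-1}(z)$.

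I expect the main obstacle to be verifying the block-triangular structure of the minors and tracking the power of $s$ rigorously, rather than by hand on a small example. The cleanest route is to argue that after deleting row $n-k+1$ and column $n$, the last $k-1$ rows and their surviving columns form a strictly upper triangular arrangement whose diagonal entries are precisely the subdiagonal entries $s$ shifted into diagonal position, each contributing a $-s$ once the $z\mathbf{I} - \mathbf{H}_n$ sign is accounted for. To avoid sign bookkeeping errors, I would also sanity-check the formula against the base cases $n = 1$ and $n = 2$ and against the convention $Q_0(z) = 1$, which pins down the normalization of the $k = n$ term.
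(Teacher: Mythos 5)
Your proof is correct: the Laplace expansion along the last column of $z\mathbf{I} - \mathbf{H}_n$, the block-triangular structure of each minor (an upper triangular block with $k-1$ diagonal entries $-s$ times $\det(z\mathbf{I}-\mathbf{H}_{n-k})$), and the sign bookkeeping $(-1)^{k-1}\cdot(-s)^{k-1} = s^{k-1}$ all check out, including the $k=n$ term that pins down $Q_0 = 1$. Note that this paper states the theorem without proof, citing the sister paper \cite{chan2018BUH}, and the argument given there is essentially this same cofactor expansion exploiting the Hessenberg structure, so your route matches the intended one (your only slip is cosmetic: the triangular block of $-s$ entries sits in the bottom-right of the minor, with the zero block bottom-left).
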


\begin{theorem}
    \label{thm:charPolyRec2}
    Expanding $Q_n(z)$ as
    \begin{equation}
        Q_n(z) = q_{n,n} z^n + q_{n,n-1} z^{n-1} + \cdots + q_{n,0},
    \end{equation}
    we can express the coefficients recursively by
    \begin{subequations}
    \label{eqn:thm2_eqn}
    \begin{align}
        q_{n,n} &= 1,\\
        q_{n,j} &= q_{n-1,j-1} - \sum_{k=1}^{n-j} s^{k-1} h_{n-k+1,n} q_{n-k,j} \quad\text{for}\quad 1 \le j \le n-1,\\
        q_{n,0} &= -\sum_{k=1}^n s^{k-1} h_{n-k+1,n} q_{n-k,0} \quad\text{for}\quad n>0,\quad\text{and}\\
        q_{0,0} &= 1\>.
    \end{align}
    \end{subequations}
\end{theorem}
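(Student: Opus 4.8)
The plan is to derive the coefficient recursion directly from the polynomial recursion in Theorem~\ref{thm:charPolyRec1} by expanding both sides in powers of $z$ and matching coefficients. First I would write $Q_m(z) = \sum_{j=0}^{m} q_{m,j} z^j$ for each index $m \le n$, and substitute these expansions into the identity $Q_n(z) = z Q_{n-1}(z) - \sum_{k=1}^{n} s^{k-1} h_{n-k+1,n} Q_{n-k}(z)$ supplied by Theorem~\ref{thm:charPolyRec1}. The entire argument is then a term-by-term comparison, with the only subtlety lying in the summation ranges.

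The shift term is immediate: $z Q_{n-1}(z) = \sum_{j=1}^{n} q_{n-1,j-1} z^j$, so it contributes $q_{n-1,j-1}$ to the coefficient of $z^j$ and nothing to the constant term. The main bookkeeping step is the double sum $\sum_{k=1}^{n} s^{k-1} h_{n-k+1,n} Q_{n-k}(z)$. Since $Q_{n-k}$ has degree $n-k$, its coefficient $q_{n-k,j}$ of $z^j$ is nonzero only when $n - k \ge j$, i.e.\ when $k \le n-j$. Collecting the coefficient of $z^j$ therefore truncates the inner sum to $k = 1, \dots, n-j$, producing exactly $\sum_{k=1}^{n-j} s^{k-1} h_{n-k+1,n} q_{n-k,j}$.

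Matching coefficients of $z^j$ on both sides then yields the three cases of the theorem at once. For $1 \le j \le n-1$ I obtain $q_{n,j} = q_{n-1,j-1} - \sum_{k=1}^{n-j} s^{k-1} h_{n-k+1,n} q_{n-k,j}$. For $j = 0$ the shift term vanishes and the degree constraint $k \le n$ retains the full sum, giving $q_{n,0} = -\sum_{k=1}^{n} s^{k-1} h_{n-k+1,n} q_{n-k,0}$. For $j = n$ only the shift term survives, so $q_{n,n} = q_{n-1,n-1}$; together with the base case $q_{0,0} = 1$ coming from $Q_0(z) = 1$, a trivial induction gives $q_{n,n} = 1$ (consistent with monicity of the characteristic polynomial).

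I expect the only real obstacle to be the index bookkeeping in the truncation of the double sum, specifically confirming that the upper limit $n-j$ arises precisely from the degree constraint $n - k \ge j$ and that the edge cases $j = 0$ and $j = n$ are handled consistently with the stated conventions $Q_0(z)=1$ and $q_{0,0}=1$. Everything else is a routine comparison of coefficients.
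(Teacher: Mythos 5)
Your proof is correct: expanding $Q_m(z)=\sum_{j=0}^m q_{m,j}z^j$ in the recurrence of Theorem~\ref{thm:charPolyRec1}, truncating the inner sum via the degree constraint $k\le n-j$, and matching coefficients (with $q_{n,n}=q_{n-1,n-1}$ and $q_{0,0}=1$ handling monicity) yields exactly the stated recursion. This paper defers the proof to its sister paper~\cite{chan2018BUH}, where the argument is this same routine coefficient extraction from the polynomial recurrence, so your approach matches the intended one.
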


\begin{definition}
    The \textit{characteristic height} of a matrix is the height of its 
    characteristic polynomial.
\end{definition}

\begin{proposition}
    \label{prop:negativeheight}
    For any matrix $\mathbf{A}$, $-\mathbf{A}$ has the same characteristic height as $\mathbf{A}$.
\end{proposition}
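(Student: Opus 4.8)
The plan is to relate the characteristic polynomial of $-\mathbf{A}$ directly to that of $\mathbf{A}$ and to observe that the relationship only permutes signs, never magnitudes, of the coefficients. Writing $Q(z) = \det(z\mathbf{I} - \mathbf{A})$ for the characteristic polynomial of an $n \times n$ matrix $\mathbf{A}$, I would first compute the characteristic polynomial of $-\mathbf{A}$ as $\det(z\mathbf{I} + \mathbf{A})$. Factoring $-1$ out of each of the $n$ rows of $-z\mathbf{I} - \mathbf{A} = -(z\mathbf{I} + \mathbf{A})$ gives the identity $\det(z\mathbf{I} + \mathbf{A}) = (-1)^n \det(-z\mathbf{I} - \mathbf{A}) = (-1)^n Q(-z)$.

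Next I would expand in coefficients. If $Q(z) = \sum_{j=0}^n q_j z^j$, then $Q(-z) = \sum_{j=0}^n (-1)^j q_j z^j$, so the characteristic polynomial of $-\mathbf{A}$ has coefficients $(-1)^{n+j} q_j$. Since $\lvert (-1)^{n+j} q_j \rvert = \lvert q_j \rvert$ for every index $j$, the multiset of coefficient magnitudes is unchanged. Taking the maximum over $j$, the heights of the two polynomials coincide, which is exactly the claim that $\mathbf{A}$ and $-\mathbf{A}$ share the same characteristic height.

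There is no genuine analytic difficulty here; the main task is bookkeeping. One must keep careful track of the two sign factors—the $(-1)^n$ arising from pulling scalars out of the determinant, and the $(-1)^j$ arising from the substitution $z \mapsto -z$—and note that neither factor affects absolute values. I would also remark that this argument is completely independent of the Hessenberg, Toeplitz, or Bohemian structure, so the proposition holds for arbitrary square matrices, which is presumably why it is stated in that generality.
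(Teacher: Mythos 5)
Your proof is correct: the identity $\det(z\mathbf{I}+\mathbf{A}) = (-1)^n Q(-z)$ and the observation that the resulting sign factors $(-1)^{n+j}$ leave coefficient magnitudes unchanged is exactly the standard argument, and your sign bookkeeping is accurate. Note that this paper states Proposition~\ref{prop:negativeheight} without proof, importing it from the sister paper~\cite{chan2018BUH}, so there is no in-paper proof to diverge from; your argument is the natural one, and your closing remark that it holds for arbitrary square matrices, independent of any Hessenberg, Toeplitz, or Bohemian structure, correctly explains the generality in which the proposition is stated.
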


\begin{proposition}
    \label{prop:maxheight}
    The maximal characteristic height of $\mathbf{H}_n \in \mathcal{H}_{\{0,\pi\}}^{n \times n}(\{-1, 0, {+1}\})$ occurs when
    $s^{k-1} h_{i,i+k-1} = -1$ for $1 \le i \le n-k+1$ and $1 \le k \le n$.
\end{proposition}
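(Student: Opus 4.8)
The plan is to benchmark an arbitrary admissible configuration against the single configuration singled out in the statement, and to show that the latter dominates it coefficient-by-coefficient in absolute value. Write $q_{n,j}^\star$ for the coefficients produced by the recursion of Theorem~\ref{thm:charPolyRec2} under the special choice $s^{k-1}h_{n-k+1,n}=-1$; substituting this value turns every subtraction in that recursion into an addition, so that
\begin{equation*}
q_{n,j}^\star = q_{n-1,j-1}^\star + \sum_{k=1}^{n-j} q_{n-k,j}^\star,\qquad q_{n,0}^\star = \sum_{k=1}^{n} q_{n-k,0}^\star,\qquad q_{n,n}^\star = 1,\qquad q_{0,0}^\star=1.
\end{equation*}
Since the special sign pattern depends only on the diagonal index $k$ and not on the row, it is inherited verbatim by every leading principal submatrix, so these $q_{n-k,j}^\star$ really are the coefficients of the smaller special matrices and the inductions below are self-consistent. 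As a first easy induction on $n$ I would record that $q_{n,j}^\star\ge 0$ for all $j$: the base case is $q_{0,0}^\star=1$, and each displayed relation writes $q_{n,j}^\star$ as a sum of quantities that are non-negative by the inductive hypothesis.

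The core step is a second induction on $n$ establishing that, for every admissible configuration, $|q_{n,j}|\le q_{n,j}^\star$ for all $j$. Because $s$ is a root of unity and each $h_{i,j}\in\{-1,0,+1\}$, we have $\bigl|s^{k-1}h_{n-k+1,n}\bigr|\le 1$. Applying the triangle inequality to the recursion of Theorem~\ref{thm:charPolyRec2} and then the inductive hypothesis gives
\begin{equation*}
|q_{n,j}| \le |q_{n-1,j-1}| + \sum_{k=1}^{n-j}\bigl|s^{k-1}h_{n-k+1,n}\bigr|\,|q_{n-k,j}| \le q_{n-1,j-1}^\star + \sum_{k=1}^{n-j} q_{n-k,j}^\star = q_{n,j}^\star,
\end{equation*}
with the cases $j=0$ and $j=n$ handled identically (the latter using $q_{n,n}=1=q_{n,n}^\star$). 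Here the non-negativity just proved is exactly what lets me recognise $\sum_{k} q_{n-k,j}^\star$ on the right-hand side as the quantity $q_{n,j}^\star$ with no cancellation lost.

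To conclude, since $|q_{n,j}|\le q_{n,j}^\star$ for each $j$, the characteristic height $\max_j|q_{n,j}|$ of any admissible member of the family is at most $\max_j q_{n,j}^\star$, and this upper bound is attained by the special configuration itself, which is an admissible member. Hence the special configuration realises the maximal characteristic height, as claimed. I expect the only delicate point to be the bookkeeping in the first paragraph: one must be sure that the chosen sign pattern is consistently inherited by all leading principal submatrices, so that the two inductions genuinely refer to the same family of optimal coefficients. Once that is pinned down, the key inequality is a routine triangle-inequality estimate resting on $|s|=1$ and the non-negativity of $q_{n,j}^\star$.
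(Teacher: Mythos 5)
Your proof is correct. Note first that this paper never actually proves Proposition~\ref{prop:maxheight}: it is quoted from the sister paper~\cite{chan2018BUH}, so the fair in-paper comparison is the proof of the Toeplitz analogue, Proposition~\ref{prop:max_height_set}, which flips signs ($\overline{t}_k = -t_k$) so that the recurrence has all non-negative coefficients and then concludes with an informal monotonicity claim (``the maximal height is attained when $\overline{t}_k$ is maximal''). Your argument supplies exactly the rigor that step glosses over: you isolate the extremal coefficients $q^\star_{n,j}$, prove their non-negativity by one induction, and then establish the coefficientwise domination $|q_{n,j}| \le q^\star_{n,j}$ for \emph{every} admissible matrix by a second (strong) induction on the dimension together with the triangle inequality, using only $\bigl|s^{k-1}h_{n-k+1,n}\bigr| \le 1$. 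You were also right to single out the heritability point as the delicate one: the recursion of Theorem~\ref{thm:charPolyRec2} couples the last-column entries of $\mathbf{H}_n$ with the characteristic polynomials of its \emph{leading principal submatrices}, and the induction is self-consistent precisely because the hypothesis $s^{k-1}h_{i,i+k-1} = -1$ is imposed for all rows $i$, forcing each diagonal to be constant (a Toeplitz pattern) which then passes verbatim to every leading principal submatrix; had the extremal condition been stated only for the last column, the induction would not close. One further observation your route buys for free: the inequality $\max_j |q_{n,j}| \le \max_j q^\star_{n,j}$ uses nothing about $s$ beyond unimodularity, so the \emph{bound} holds for any $\mathcal{H}_{\{\theta_k\}}^{n\times n}(\{-1,0,+1\})$; what is special about $\{0,\pi\}$ is attainment, since $s = \pm 1$ makes the required entries $h_{i,i+k-1} = -s^{k-1} \in \{-1,+1\}$ admissible (and, as a byproduct, shows the $s=1$ and $s=-1$ extremal matrices share the same characteristic polynomial, consistent with Proposition~\ref{prop:maxheight_UHT2}), whereas for general roots of unity the extremal pattern would leave the population.
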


\section{Upper Hessenberg Toeplitz Matrices}
For the remainder of the paper consider upper Hessenberg matrices with a Toeplitz structure of the form
\begin{equation}
    \mathbf{M}_n =
    \begin{bmatrix}
        t_1   & t_2    & t_3     & \cdots & t_n\\
        1   & t_1    & t_2     & \cdots & t_{n-1}\\
        0     & 1    & t_1     & \cdots & t_{n-2}\\
        \vdots & \ddots & \ddots & \ddots & \vdots\\
        0      & \cdots & 0       & 1    & t_1
    \end{bmatrix}
\end{equation}
with $t_k \in \{-1, 0, {+1}\}$ for $1 \le k \le n$.
Let
\begin{equation}
P_n(z) \equiv \det(z \mathbf{I} - \mathbf{M}_n) = \sum_{k=0}^n p_{n,k}z^k
\end{equation}
be the characteristic polynomial of $\mathbf{M}_n$ with $p_{n,n} = 1$.

% ---------------------------------------------------------------------------- %
% ---------------------------------------------------------------------------- %
\begin{proposition}
    \label{prop:charpolyrec1_UHT}
    The characteristic polynomial recurrence from Theorem~\ref{thm:charPolyRec1} can be written for upper Hessenberg Toeplitz matrices as
    \begin{equation}
        \label{eqn:thm1_UHT}
        P_n(z) = zP_{n-1}(z) - \sum_{k=1}^n t_k P_{n-k}(z)
    \end{equation}
        with the convention that $P_0(z) = 1$ ($\mathbf{M}_0 = [\,]$, the empty matrix).
\end{proposition}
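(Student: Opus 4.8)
The plan is to derive the Toeplitz recurrence \eqref{eqn:thm1_UHT} directly from the general upper Hessenberg recurrence \eqref{eqn:thm1} of Theorem~\ref{thm:charPolyRec1} by specializing the matrix entries. First I would recall that $\mathbf{M}_n$ is the special case of $\mathbf{H}_n$ in which the subdiagonal entries satisfy $s = 1$ and the upper triangle entries are constant along diagonals: concretely, $h_{i,j} = t_{\,j-i+1}$ for all $1 \le i \le j \le n$. With this identification the two characteristic polynomials coincide, so $P_n(z) = Q_n(z)$ for every $n$, and in particular the convention $P_0(z) = Q_0(z) = 1$ is inherited.

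The main step is then a substitution into the sum appearing in \eqref{eqn:thm1}. Setting $s = 1$ removes the factor $s^{k-1}$ entirely, and I would compute the Toeplitz value of the coefficient $h_{n-k+1,\,n}$. Using the rule $h_{i,j} = t_{\,j-i+1}$ with $i = n-k+1$ and $j = n$, the column index minus the row index plus one is $n - (n-k+1) + 1 = k$, so $h_{n-k+1,\,n} = t_k$. Substituting both simplifications into \eqref{eqn:thm1} turns the general recurrence
\begin{equation*}
    Q_n(z) = zQ_{n-1}(z) - \sum_{k=1}^{n} s^{k-1} h_{n-k+1,n} Q_{n-k}(z)
\end{equation*}
into exactly
\begin{equation*}
    P_n(z) = zP_{n-1}(z) - \sum_{k=1}^{n} t_k\, P_{n-k}(z),
\end{equation*}
which is the claimed formula.

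The one thing I would check carefully, and what I expect to be the only real obstacle, is the index bookkeeping in the Toeplitz identification: confirming that the diagonal-indexing convention $h_{i,j} = t_{\,j-i+1}$ is consistent with the displayed form of $\mathbf{M}_n$ (so that $t_1$ sits on the main diagonal, $t_2$ on the first superdiagonal, and so on), and that the range $1 \le k \le n$ of the summation is preserved under the change of variables. Since every $\mathbf{M}_n$ with $k < n$ is literally a leading principal submatrix of a larger such matrix and the recurrence is purely combinatorial in these indices, no separate induction is needed beyond the base case already supplied by the convention $P_0(z) = 1$. This makes the proposition essentially a clean corollary of Theorem~\ref{thm:charPolyRec1} once the entries are correctly matched.
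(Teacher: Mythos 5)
Your proposal is correct and takes essentially the same approach as the paper: both derive \eqref{eqn:thm1_UHT} by specializing Theorem~\ref{thm:charPolyRec1} via the Toeplitz identification $h_{n-k+1,n} = t_k$, with the substitution also eliminating the factor $s^{k-1}$. If anything, your write-up is slightly more careful than the paper's one-line proof, since you explicitly set $s=1$ and verify the index arithmetic $n-(n-k+1)+1=k$, both of which the paper leaves implicit.
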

\begin{proof}
    For a matrix $\mathbf{M}_n$, the entries at the $i$th row and the $i+k-1$-th column for $1 \le i \le n-k+1$ (i.e. the $k-1$-th diagonal) are all equal to $t_k$. In equation~\eqref{eqn:thm1}, we can replace $h_{n-k+1,n}$ with $t_k$ ($i=n-k+1$) recovering equation~\eqref{eqn:thm1_UHT}.
\end{proof}

% ---------------------------------------------------------------------------- %
% ---------------------------------------------------------------------------- %
\begin{proposition}
    \label{prop:charpolyrec2_UHT}
    The characteristic polynomial recurrence from Theorem~\ref{thm:charPolyRec2}  can be written for upper Hessenberg Toeplitz matrices as
    \begin{subequations}
    \label{eqn:thm2_cor_UHT}
    \begin{align}
        p_{n,n} &= 1,\\
        p_{n,j} &= p_{n-1,j-1} - \sum_{k=1}^{n-j} t_k p_{n-k,j} \quad\text{for}\quad 1 \le j \le n-1, \label{eqn:them2_cor_UHT_b}\\
        p_{n,0} &= -\sum_{k=1}^n t_k p_{n-k,0}, \,\text{and} \label{eqn:them2_cor_UHT_c}\\
        p_{0,0} &= 1 \label{eqn:them2_cor_UHT_d}\>.
    \end{align}
    \end{subequations}
\end{proposition}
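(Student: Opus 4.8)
The plan is to specialize the general coefficient recurrence of Theorem~\ref{thm:charPolyRec2} to the Toeplitz setting, exactly parallel to the way the polynomial recurrence was specialized in Proposition~\ref{prop:charpolyrec1_UHT}. Since $\mathbf{M}_n$ has subdiagonal entries equal to $1$, I would set $s = 1$, so that every factor $s^{k-1}$ appearing in~\eqref{eqn:thm2_eqn} becomes $1$. It then remains only to identify the entries $h_{n-k+1,n}$ of the general formula with the appropriate Toeplitz parameters $t_k$, and to relabel the general coefficients $q_{n,j}$ as the Toeplitz coefficients $p_{n,j}$.

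First I would record the entry identification. The entry $h_{n-k+1,n}$ sits in row $n-k+1$ and column $n$, so it lies on the diagonal with index $j - i = n - (n-k+1) = k-1$. Because $\mathbf{M}_n$ is upper Hessenberg Toeplitz with its $(k-1)$-th diagonal equal to $t_k$, we have $h_{n-k+1,n} = t_k$ for each $1 \le k \le n$. This is the same identification already used in the proof of Proposition~\ref{prop:charpolyrec1_UHT}.

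Substituting $s = 1$ and $h_{n-k+1,n} = t_k$ into each line of~\eqref{eqn:thm2_eqn}, and writing $p_{m,j}$ for $q_{m,j}$, then yields the stated system directly. The leading-coefficient equation $q_{n,n}=1$ and the base case $q_{0,0}=1$ transcribe unchanged to $p_{n,n}=1$ and~\eqref{eqn:them2_cor_UHT_d}, while the two summation formulas collapse because $s^{k-1}=1$, giving~\eqref{eqn:them2_cor_UHT_b} for $1 \le j \le n-1$ and the constant-term recurrence~\eqref{eqn:them2_cor_UHT_c}.

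I do not anticipate any real obstacle: the result is a transcription of an already-established recurrence under a substitution, so every step is an equality of symbols rather than a computation requiring estimates or a separate induction. The only point deserving care is the index bookkeeping in the identification $h_{n-k+1,n} = t_k$, which must be verified to hold across the full range $1 \le k \le n$, so that both the $1 \le j \le n-1$ recurrence (whose sum runs to $n-j$) and the constant-term recurrence (whose sum runs to $n$) specialize correctly.
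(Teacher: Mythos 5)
Your proposal is correct and takes exactly the paper's route: the paper's proof likewise specializes Theorem~\ref{thm:charPolyRec2} by setting $s=1$ and making the replacement $h_{n-k+1,n} = t_k$ (the same notational change used for Proposition~\ref{prop:charpolyrec1_UHT}), merely stating it more tersely than you do. Your extra care with the diagonal-index bookkeeping $j - i = k-1$ is a sound elaboration of the same argument, not a different one.
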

\begin{proof}
    Performing the same replacement as above (a notational change), we recover equation~\eqref{eqn:thm2_cor_UHT}.
\end{proof}

% ---------------------------------------------------------------------------- %
% Proposition: p_{n,i} is a function of t_j for j <= n-i                       %
% ---------------------------------------------------------------------------- %
\begin{proposition}
    \label{prop:funof}
    $p_{n,i}$ is independent of $t_j$ for $j > n-i$.
\end{proposition}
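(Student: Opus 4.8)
The plan is to prove the statement by strong induction on the matrix size $n$, using the coefficient recurrence of Proposition~\ref{prop:charpolyrec2_UHT}. I would read the $t_k$ as a single shared sequence, so that $\mathbf{M}_m$ is the leading principal submatrix of $\mathbf{M}_n$ and is built from $t_1,\ldots,t_m$; this makes the recurrence, which relates coefficients of $\mathbf{M}_n$, $\mathbf{M}_{n-1}$, and $\mathbf{M}_{n-k}$, coherent as a statement about functions of the same variables. The induction hypothesis I would carry is the full claim for all smaller sizes: for every $m < n$ and every $0 \le i \le m$, the coefficient $p_{m,i}$ depends only on $t_1,\ldots,t_{m-i}$.

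For the boundary indices I would argue separately. The top coefficient $p_{n,n} = 1$ is independent of every $t_j$, which settles $i = n$ (there the claim is independence of $t_j$ for $j > 0$). The bottom index $i = 0$ is vacuous, since the claim only forbids dependence on $t_j$ with $j > n$ and no such index exists in $\mathbf{M}_n$.

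The heart of the argument is the generic range $1 \le i \le n-1$, where
\begin{equation*}
    p_{n,i} = p_{n-1,i-1} - \sum_{k=1}^{n-i} t_k\, p_{n-k,i}.
\end{equation*}
This is purely an index count. The $t_k$ appearing \emph{explicitly} are exactly $t_1,\ldots,t_{n-i}$, all with index $\le n-i$, hence permitted. For the remaining factors I apply the induction hypothesis: $p_{n-1,i-1}$ depends only on $t_1,\ldots,t_{(n-1)-(i-1)} = t_1,\ldots,t_{n-i}$, and for each $1 \le k \le n-i$ the factor $p_{n-k,i}$ depends only on $t_1,\ldots,t_{(n-k)-i}$, a set contained in $\{t_1,\ldots,t_{n-i}\}$ since $k \ge 1$. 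These coefficients are all well defined because $n-k \ge i$ precisely on the summation range, with the endpoint $k = n-i$ giving $p_{i,i} = 1$. Hence every term on the right-hand side, and therefore $p_{n,i}$, depends only on $t_1,\ldots,t_{n-i}$, which is the claim at size $n$.

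I do not expect a genuine obstacle here beyond careful bookkeeping: the essential fact is that the upper summation limit $n-i$ coincides exactly with the largest $t$-index the statement allows $p_{n,i}$ to see, and that the recursion, reaching down to $p_{n-1,i-1}$ and $p_{n-k,i}$, never probes deeper into the $t$-sequence. The one place to stay attentive is the boundary matching — confirming that $p_{n-1,i-1}$ is \emph{allowed} to depend on $t_{n-i}$ (it is, with equality), and treating $i = n$ and $i = 0$ outside the generic recurrence.
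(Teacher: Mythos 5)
Your proof is correct, and the index bookkeeping is exactly right: in the recurrence of Proposition~\ref{prop:charpolyrec2_UHT} the explicitly appearing variables are $t_1,\ldots,t_{n-i}$, the term $p_{n-1,i-1}$ is permitted to reach $t_{(n-1)-(i-1)}=t_{n-i}$ (with equality), and each $p_{n-k,i}$ with $k\ge 1$ reaches only up to $t_{n-k-i}$. Where you differ from the paper is in the induction scaffolding, and your choice is the sounder one. You run a strong induction on the matrix size $n$, consuming the recurrence in the direction it is generated --- every coefficient on the right-hand side, $p_{n-1,i-1}$ and $p_{n-k,i}$, belongs to a strictly smaller matrix --- anchoring at $p_{n,n}=1$ and observing that $i=0$ is vacuous. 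The paper instead inducts on the coefficient index: it isolates $p_{n-1,\ell-1} = p_{n,\ell} + \sum_{k=1}^{n-\ell} t_k\, p_{n-k,\ell}$ and argues that the claim at index $\ell$ (for all $n$) yields the claim at index $\ell-1$, taking $i=0$ as the base case, where $p_{n,0}$ is a function of $t_1,\ldots,t_n$ and the claim is trivially true. As literally written, that step descends in the index while the base case sits at the bottom, and it deduces a coefficient of the smaller matrix from coefficients of the larger one, so the logical ordering is awkward and arguably reversed; the intended content, however, is precisely the index count you perform. What your version buys is a cleanly well-founded recursion with the endpoints $i=n$ and $i=0$ treated explicitly, at no cost in generality; what the paper's rearrangement buys is, at most, a statement organized by coefficient index rather than by matrix size. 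No gap in your argument.
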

\begin{proof}
    First, assume $p_{n,\ell}$ is a function of $t_1, \ldots, t_{n-\ell}$ for $\ell = i$ and all $n$. By Proposition~\ref{prop:charpolyrec2_UHT}
    \begin{equation}
        p_{n,\ell} = p_{n-1,\ell-1} - \sum_{k=1}^{n-\ell} t_k p_{n-k,\ell} \>.
    \end{equation}
    Isolating the $p_{n-1,\ell-1}$ term, we have
    \begin{equation}
        p_{n-1,\ell-1} = p_{n,\ell} + \sum_{k=1}^{n-\ell} t_k p_{n-k,\ell} 
    \end{equation}
    The first term, $p_{n,\ell}$, is a function of $t_1, \ldots, t_{n-\ell}$. Each term $t_k p_{n-k,\ell}$ in the sum is a function of $t_1, \ldots, t_{n-k-\ell}, t_k$. Taking $k=n-\ell$, we have the sum is a function of $t_1, \ldots, t_{n-\ell}$. Hence, $p_{n-1,\ell-1}$ is a function of $t_1, \ldots, t_{n-1-(\ell-1)} = t_{n-\ell}$.
    
    When $i = 0$, by Proposition~\ref{prop:charpolyrec2_UHT} we have
    \begin{equation}
        p_{n,0} = -\sum_{k=1}^n t_k p_{n-k,0}
    \end{equation}
    which is a function of $t_1, \ldots, t_n$.
\end{proof}

\begin{theorem}
    The set of characteristic polynomials for all matrices $\mathbf{M}_n$ with $t_k \in \{-1, 0, +1\}$ for $1 \le k \le n$ has cardinality $3^n$.
\end{theorem}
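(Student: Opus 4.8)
The plan is to show that the natural map sending a coefficient tuple $(t_1,\dots,t_n)\in\{-1,0,+1\}^n$ to its characteristic polynomial $P_n(z)$ is injective. Since the domain has exactly $3^n$ elements, injectivity forces the set of characteristic polynomials to have cardinality $3^n$ as well. Concretely, I would show that the entries $t_1,\dots,t_n$ can be read off, one at a time, from the coefficients $p_{n,n-1},p_{n,n-2},\dots,p_{n,0}$ of $P_n(z)$.

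The structural input is Proposition~\ref{prop:funof}, which guarantees that $p_{n,n-m}$ depends only on $t_1,\dots,t_m$. On top of this I would establish the sharper claim that, for $1\le m\le n$, the coefficient $p_{n,n-m}$ is \emph{affine} in $t_m$, of the form
\begin{equation}
    p_{n,n-m} = -(n-m+1)\,t_m + g_{n,m}(t_1,\dots,t_{m-1}),
\end{equation}
where $g_{n,m}$ does not involve $t_m$. The coefficient $-(n-m+1)$ is nonzero for every $1\le m\le n$, which is exactly what makes the recovery possible: given $t_1,\dots,t_{m-1}$ (hence the value of $g_{n,m}$) together with the known coefficient $p_{n,n-m}$, the value of $t_m$ is uniquely determined.

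To prove the displayed claim I would use the coefficient recurrence~\eqref{eqn:them2_cor_UHT_b}, which with $j=n-m$ gives
\begin{equation}
    p_{n,n-m} = p_{n-1,n-m-1} - \sum_{k=1}^{m} t_k\,p_{n-k,n-m}.
\end{equation}
The point is to track the dependence on $t_m$. For $k<m$ the factor $p_{n-k,n-m}$ depends only on $t_1,\dots,t_{m-k}$ by Proposition~\ref{prop:funof}, hence not on $t_m$; for $k=m$ the term is simply $-t_m\,p_{n-m,n-m}=-t_m$. Differentiating in $t_m$ therefore collapses the sum to the single contribution $-1$, leaving $\partial_{t_m}p_{n,n-m} = \partial_{t_m}p_{n-1,n-m-1} - 1$. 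Since $p_{n-1,n-m-1}$ is itself of the same type with $n$ replaced by $n-1$, an induction on $n$ (with base case $p_{m,0}$, where~\eqref{eqn:them2_cor_UHT_c} gives $\partial_{t_m}p_{m,0}=-1$) yields $\partial_{t_m}p_{n,n-m} = -(n-m+1)$, a constant; this simultaneously proves affineness and pins down the slope.

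The main obstacle is precisely this last step: the term $p_{n-1,n-m-1}$ in the recurrence also depends on $t_m$, so one cannot read the slope off the recurrence in a single stroke. The induction on $n$ is what disentangles the recursive propagation of $t_m$ and shows that each level contributes exactly $-1$ to the slope. Once the affine form is in hand, the sequential recovery of $t_1,\dots,t_n$ gives injectivity, and the cardinality count follows immediately. As an alternative to the induction, one can grade the variables by $\deg t_k = k$ and observe that $p_{n,n-m}$ is weighted-homogeneous of degree $m$, so $t_m$ can appear only through the standalone monomial $t_m$; computing its coefficient then reduces to the same base computation.
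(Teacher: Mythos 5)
Your proposal is correct, and it takes a genuinely different route from the paper. The paper stays at the level of whole polynomials: it supposes two entry tuples $(t_k)$ and $(a_k)$ have characteristic polynomial towers agreeing up to order $n-1$, substitutes into the polynomial recurrence~\eqref{eqn:thm1_UHT}, and uses the fact that $P_0,\dots,P_{n-1}$ are monic of the distinct degrees $0,\dots,n-1$ (hence linearly independent) to force $t_k=a_k$ for all $k$. You instead work coefficientwise, sharpening Proposition~\ref{prop:funof} to the triangular affine structure $p_{n,n-m}=-(n-m+1)\,t_m+g_{n,m}(t_1,\dots,t_{m-1})$ and reading the entries off $P_n$ one at a time. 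Your slope computation is right: in the recurrence~\eqref{eqn:them2_cor_UHT_b} with $j=n-m$, the $k=m$ term contributes $-t_m p_{n-m,n-m}=-t_m$ since the polynomials are monic, the terms with $k<m$ are $t_m$-free by Proposition~\ref{prop:funof}, and the induction on $n\ge m$ down to the base case $\partial_{t_m}p_{m,0}=-1$ from~\eqref{eqn:them2_cor_UHT_c} gives $\partial_{t_m}p_{n,n-m}=-(n-m+1)$; small cases confirm this (e.g.\ $p_{n,n-1}=-nt_1$ is the trace, and $p_{3,1}=3t_1^2-2t_2$). The weighted-homogeneity alternative also works, since the recurrence propagates weight $n-j$ for $p_{n,j}$ when $\deg t_k = k$, so $t_m$ can only appear as a standalone linear monomial in $p_{n,n-m}$. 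What your version buys: the tuple is recovered from the coefficients of the \emph{single} polynomial $P_n$, with no reference to lower-dimensional matrices; this is tidier logically, because the paper's hypothesis that $P_\ell = R_\ell$ for all $\ell < n$ is not literally a consequence of $P_n = R_n$ and must be read as comparing the full towers of leading principal submatrices. Your argument is also more general, giving injectivity of $(t_1,\dots,t_n)\mapsto P_n$ over any entry set, so any population of size $p$ yields exactly $p^n$ distinct characteristic polynomials. What the paper's version buys is brevity: a one-line degree count against the recurrence, with no need for the explicit slope $-(n-m+1)$ or the auxiliary induction on $n$.
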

\begin{proof}
    Let
    \begin{equation}
    \mathbf{A}_n =
    \begin{bmatrix}
        a_1   & a_2    & a_3     & \cdots & a_n\\
        1   & a_1    & a_2     & \cdots & a_{n-1}\\
        0     & 1    & a_1     & \cdots & a_{n-2}\\
        \vdots & \ddots & \ddots & \ddots & \vdots\\
        0      & \cdots & 0       & 1    & a_1
    \end{bmatrix}
    \end{equation}
    with $a_k \in \{-1, 0, +1\}$ for $1 \le k \le n$.
    Let $R_n(z; a_1, \ldots, a_n)$ be the characteristic polynomial of $\mathbf{A}_n$.
    Assume $P_{\ell} = R_{\ell}$ for $\ell < n$. 
    By Proposition~\ref{prop:charpolyrec1_UHT}, for $\mathbf{A}_n$ and $\mathbf{M}_n$ to have the same characteristic polynomial we find
    \begin{equation}
    zP_{n-1} - \sum_{k=1}^n t_k P_{n-k} = zR_{n-1} - \sum_{k=1}^n a_k R_{n-k} \>.
    \end{equation}
    Since $P_{\ell} = R_{\ell}$ for all $\ell < n$, and the $\sum_{k=1}^n t_k P_{n-k}$ and $\sum_{k=1}^n t_k R_{n-k}$ terms are polynomials of degree $n-1$ in $z$, we find $P_n = R_n$ only when $t_k = a_k$ for all $1 \le k \le n$ (the $zP_{n-1}$ and $zR_{n-1}$ terms are the only terms of degree $n$ in $z$). Hence, for each combination of $t_k$, no other upper Hessenberg Toeplitz matrix with $t_k \in \{-1, 0, +1\}$ and subdiagonal $1$ has the same characteristic polynomial.
\end{proof}

% \begin{proposition}
%     The set of all characteristic polynomials of all matrices in $\mathcal{M}_n$ has cardinality $3^n$.
% \end{proposition}
% \begin{proof}
%     By Corollary~\ref{cor:onlysim}, each matrix $\mathbf{M}_n \in \mathcal{M}_n$ is similar to only one other matrix $\mathbf{M}_n^{+} \in \mathcal{M}_n$. Since $\mathbf{M}_n \ne \mathbf{M}_n^{+}$, and $\mathbf{M}_n$ shares a characteristic polynomial with $\mathbf{M}_n^{+}$; the cardinality of the set of characteristic polynomials must be half the cardinality of $\mathcal{M}_n$.
% \end{proof}

% ============================================================================ %
% Maximal Characteristic Height Upper Hessenberg Toeplitz Matrices             %
% ============================================================================ %
% Remove
\section{Maximal Characteristic Height Upper Hessenberg Toeplitz Matrices}
\begin{theorem}
    \label{thm:maxheight_UHT}
    The characteristic height of $\mathbf{M}_n$ is maximal when $t_k = -1$ for $1 \le k \le n$.
\end{theorem}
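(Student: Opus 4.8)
The plan is to dominate, coefficient by coefficient, the characteristic polynomial of an arbitrary admissible $\mathbf{M}_n$ by the one produced by the choice $t_k=-1$ for every $k$. Write $\tilde p_{n,j}$ for the coefficients of this distinguished polynomial, i.e.\ the values returned by the recurrence of Proposition~\ref{prop:charpolyrec2_UHT} when all $t_k=-1$, so that every minus sign disappears and
\begin{equation}
    \tilde p_{n,n}=1,\qquad
    \tilde p_{n,j}=\tilde p_{n-1,j-1}+\sum_{k=1}^{n-j}\tilde p_{n-k,j},\qquad
    \tilde p_{n,0}=\sum_{k=1}^{n}\tilde p_{n-k,0}.
\end{equation}
The central claim I would establish is that for \emph{every} choice of $t_k\in\{-1,0,+1\}$ and all $0\le j\le n$,
\begin{equation}
    \label{eqn:domination}
    |p_{n,j}|\le \tilde p_{n,j}.
\end{equation}
Granting \eqref{eqn:domination}, the height of $P_n$ satisfies $\max_j|p_{n,j}|\le\max_j\tilde p_{n,j}\le\max_j|\tilde p_{n,j}|$, which is precisely the height of the all-$(-1)$ polynomial; since that polynomial is itself admissible, it attains the maximal characteristic height, which is the assertion of the theorem.

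I would prove \eqref{eqn:domination} by strong induction on $n$, with $j$ quantified inside. The base data $p_{0,0}=1=\tilde p_{0,0}$ and the leading coefficients $p_{n,n}=1=\tilde p_{n,n}$ are immediate. For $1\le j\le n-1$, the recurrence \eqref{eqn:them2_cor_UHT_b}, the triangle inequality, and the bound $|t_k|\le1$ give
\begin{equation}
    |p_{n,j}|
    \le |p_{n-1,j-1}|+\sum_{k=1}^{n-j}|t_k|\,|p_{n-k,j}|
    \le |p_{n-1,j-1}|+\sum_{k=1}^{n-j}|p_{n-k,j}|,
\end{equation}
and applying the inductive hypothesis to each term on the right collapses it, via the $\tilde p$-recurrence above, to exactly $\tilde p_{n,j}$. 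The case $j=0$ is identical using \eqref{eqn:them2_cor_UHT_c}. A convenient byproduct is that $\tilde p_{n,j}\ge|p_{n,j}|\ge0$, so the distinguished coefficients are nonnegative and $\max_j\tilde p_{n,j}$ is genuinely the height we are comparing against.

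There is no deep obstacle here; the statement is essentially the Toeplitz specialization of Proposition~\ref{prop:maxheight} (with $s=1$ and $h_{i,i+k-1}=t_k$, the condition $s^{k-1}h_{i,i+k-1}=-1$ reads $t_k=-1$), and the argument mirrors it. The only point requiring care is bookkeeping the dependency structure of the induction: through Proposition~\ref{prop:charpolyrec2_UHT} every coefficient $p_{n,\cdot}$ is expressed solely in terms of coefficients $p_{m,\cdot}$ with $m<n$, so strong induction on $n$ is the correct frame and the inductive hypothesis is always available for each term appearing on the right. I would also note that \eqref{eqn:domination} is stronger than bare maximality of the height, since it shows the all-$(-1)$ choice dominates \emph{every} coefficient simultaneously; this will be convenient for the explicit height estimates pursued later in the paper.
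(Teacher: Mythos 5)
Your proof is correct, but it takes a genuinely different route from the paper. The paper disposes of this theorem in two lines: it invokes Proposition~\ref{prop:maxheight} from the sister paper (maximal characteristic height in $\mathcal{H}_{\{0,\pi\}}^{n\times n}(\{-1,0,+1\})$ occurs when $s^{k-1}h_{i,i+k-1}=-1$), substitutes $s=1$ and the Toeplitz identification $h_{i,i+k-1}=t_k$, and reads off $t_k=-1$. You instead reprove the result from scratch via the coefficient-wise domination $|p_{n,j}|\le \tilde p_{n,j}$, established by strong induction on $n$ through the recurrence of Proposition~\ref{prop:charpolyrec2_UHT}; the induction is sound, since each $p_{n-k,j}$ appearing on the right is a coefficient of the characteristic polynomial of a leading principal submatrix governed by the same truncated sequence $t_1,\dots,t_{n-k}$, hence again admissible, and the triangle inequality with $|t_k|\le 1$ collapses exactly onto the all-$(-1)$ recurrence. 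What your approach buys is self-containedness (no appeal to the sister paper) plus a strictly stronger conclusion: simultaneous domination of \emph{every} coefficient, which as a byproduct gives the positivity $\tilde p_{n,j}>0$ that the paper later establishes separately (in the proposition opening the section on maximal height characteristic polynomials). What the paper's route buys is brevity and generality, since Proposition~\ref{prop:maxheight} covers arbitrary upper Hessenberg (non-Toeplitz) matrices and both subdiagonal signs $s=\pm 1$, whereas your argument as written is tied to the Toeplitz recurrence. One cosmetic remark: in your chain $\max_j|p_{n,j}|\le\max_j\tilde p_{n,j}\le\max_j|\tilde p_{n,j}|$ the last inequality is an equality by the nonnegativity you establish; nothing is at stake, but stating it as equality would be cleaner.
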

\begin{proof}
    Following from Proposition~\ref{prop:maxheight}, the entries in the $i$th row and $i+k-1$-th column for $1 \le i \le n-k+1$ correspond to $t_k$, after substituting $s=1$ we find
    $t_k = -1$ gives the maximal characteristic height.
\end{proof}

\begin{proposition}
    \label{prop:max_height_set}
    Let $F \subset \mathbb{R}$ be a closed and bounded set with $a = \min{F}$, $b = \max{F}$ and $\#F \ge 2$. Let $\mathbf{M}_n$ be upper Hessenberg Toeplitz with $t_k \in F$. If $|a| \ge |b|$, $\mathbf{M}_n$ attains maximal characteristic height for $t_k = a$ for all $1 \le k \le n$. If $|b| \ge |a|$, $\mathbf{M}_n$ attains maximal characteristic height for $t_k = a$ for $k$ even, and $t_k = b$ for $k$ odd.
\end{proposition}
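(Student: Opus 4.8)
The plan is to argue entirely from the coefficient recurrence of Proposition~\ref{prop:charpolyrec2_UHT}, since the characteristic height is $\max_j|p_{n,j}|$, and to show that in each case a single admissible matrix dominates every $|p_{n,j}(t)|$ coefficientwise. Throughout write $m=\max(|a|,|b|)$, so that $|t_k|\le m$ for every admissible choice. The governing idea is that the extremal matrix is the one whose coefficients carry a fixed sign pattern turning the recurrence into a purely \emph{additive} one, so that the triangle inequality applied to a competitor can only lose.

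First I would dispose of the case $|a|\ge|b|$, where (since $\#F\ge2$ this forces $a<0$ and $m=|a|=-a$) the candidate is $t_k=a$ for all $k$. With this choice Proposition~\ref{prop:charpolyrec2_UHT} becomes $p_{n,j}=p_{n-1,j-1}+|a|\sum_{k=1}^{n-j}p_{n-k,j}$ together with $p_{n,0}=|a|\sum_{k=1}^{n}p_{n-k,0}$, so an immediate induction gives $p^{\sharp}_{n,j}\ge0$ for all $n,j$. For an arbitrary admissible $t$, the triangle inequality applied to the same recurrence, together with the inductive hypothesis $|p_{m,i}(t)|\le p^{\sharp}_{m,i}$, yields $|p_{n,j}(t)|\le p^{\sharp}_{n-1,j-1}+\sum_k|t_k|\,p^{\sharp}_{n-k,j}\le p^{\sharp}_{n-1,j-1}+m\sum_k p^{\sharp}_{n-k,j}=p^{\sharp}_{n,j}$, the last equality holding because $m=|a|$. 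Taking the maximum over $j$ proves optimality of the all-$a$ matrix; this is exactly the mechanism behind Theorem~\ref{thm:maxheight_UHT} with $F=\{-1,0,+1\}$ and $a=-1$.

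For the case $|b|\ge|a|$ (which forces $b>0$ and $m=b$) I would pass to the signed coefficients $u_{n,j}=(-1)^{n-j}p_{n,j}$ and the reflected entries $\tau_k=(-1)^{k+1}t_k$. Multiplying the recurrence by $(-1)^{n-j}$ converts it into the additive form $u_{n,j}=u_{n-1,j-1}+\sum_{k=1}^{n-j}\tau_k\,u_{n-k,j}$, in which $\tau_k$ ranges over $F$ for $k$ odd and over $-F$ for $k$ even; its largest admissible value is therefore $b$ for odd $k$ and $-a$ for even $k$, which is precisely the candidate $t_k=b$ ($k$ odd), $t_k=a$ ($k$ even). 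Choosing $\tau_k$ maximal and inducting shows $u^{\sharp}_{n,j}\ge0$, i.e. the candidate obeys the sign law $\operatorname{sign}p^{\sharp}_{n,j}=(-1)^{n-j}$, and that among all $\tau$ it maximizes each contribution $\tau_k u_{n-k,j}$ \emph{provided the factors $u_{n-k,j}$ are nonnegative}.

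The hard part will be exactly that proviso. Because $m=b$ strictly exceeds $|a|$ on the even diagonals, the crude bound $|\tau_k|\le m$ overshoots $\tau^{\sharp}_k=-a$ there, so the triangle inequality by itself does \emph{not} give $|u_{n,j}(t)|\le u^{\sharp}_{n,j}$; and it genuinely cannot, since $p_{n,j}$ is nonlinear in the $t_k$ (already $p_{2,0}=t_1^2-t_2$), so one may not restrict to the vertices of $[a,b]^n$ and argue one coordinate at a time. What must be captured is that an even-diagonal entry of magnitude exceeding $|a|$ necessarily enters the additive recurrence with the wrong sign, so it can only shrink $u_{n,j}$ and never inflate $|u_{n,j}|$. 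I would attempt to make this rigorous by strengthening the induction to the two-sided estimate $-u^{\sharp}_{n,j}\le u_{n,j}(t)\le u^{\sharp}_{n,j}$ and propagating both inequalities through the additive recurrence while tracking the sign of each factor $u_{n-k,j}$ via the sign law already established for the candidate. Finally, the degenerate boundary $|a|=|b|$ (which forces $a=-b$) need not be fought directly: it follows from the first case via the similarity-plus-negation symmetry $h(t_1,t_2,t_3,\dots)=h(-t_1,t_2,-t_3,\dots)$ obtained by conjugating $\mathbf{M}_n$ with $\operatorname{diag}(1,-1,1,\dots)$ and invoking Proposition~\ref{prop:negativeheight}.
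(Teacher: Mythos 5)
Your Case~1 is sound and in fact sharpens the paper's own argument: the paper passes to $\overline{t}_k=-t_k$, observes that positive $\overline{t}_k$ force $p_{n,j}>0$, and then asserts maximality at $\overline{t}_k$ maximal; your triangle-inequality induction $|p_{n,j}(t)|\le p^{\sharp}_{n,j}$, valid because $|t_k|\le m=|a|$ matches the candidate's own coefficient exactly, is precisely the domination step over mixed-sign competitors that the paper leaves implicit. Your closing symmetry $h(t_1,t_2,t_3,\dots)=h(-t_1,t_2,-t_3,\dots)$, obtained by conjugating with $\operatorname{diag}(1,-1,1,\dots)$ and invoking Proposition~\ref{prop:negativeheight}, is also correct and does dispose of the boundary $|a|=|b|$ (the paper instead notes that its two recurrences coincide there).

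The genuine gap is the main subcase $|b|>|a|$, and you have named it yourself without closing it. After the transformation $u_{n,j}=(-1)^{n-j}p_{n,j}$, $\tau_k=(-1)^{k+1}t_k$ --- which is the same move the paper makes by passing to $-\mathbf{M}_n$ and splitting the sums into odd and even $k$ --- your plan is to propagate $-u^{\sharp}_{n,j}\le u_{n,j}(t)\le u^{\sharp}_{n,j}$ ``while tracking the sign of each factor via the sign law already established for the candidate.'' That is circular: the sign law holds for the candidate's coefficients only, a competitor's $u_{n-k,j}(t)$ can have either sign, and whenever an even-indexed entry has $|\tau_k|=b>-a=\tau^{\sharp}_k$ the inductive step yields the bound $u^{\sharp}_{n-1,j-1}+\sum_k|\tau_k|\,u^{\sharp}_{n-k,j}$, which exceeds $u^{\sharp}_{n,j}$ --- exactly the overshoot you pointed out, and nothing in the proposal repairs it. So as submitted, Case~2 is a plan rather than a proof. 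A further wrinkle: your claim that choosing $\tau_k$ maximal ``shows $u^{\sharp}_{n,j}\ge 0$'' uses $\tau^{\sharp}_k=-a\ge 0$, which fails when $a>0$ (e.g.\ $F=\{1,2\}$ satisfies the hypotheses of Case~2), so even the candidate's sign law needs care at the stated level of generality. For comparison, the paper's own proof of this case simply maximizes termwise in equation~\eqref{eq:polyrec2_odd_even} --- ``the odd sums are maximal for $t_k=b$ and the even sums for $t_k=a$'' --- treating each $p_{n-k,j}$ as a fixed nonnegative quantity even though it depends nonlinearly on $t$; that is precisely the step you rightly declined to accept, but having set a stricter standard than the paper, you did not meet it.
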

\begin{proof}
    First, consider the case when $|a| \ge |b|$. Since $a < b$ we find $a < 0$. Let $\overline{t}_k = -t_k$. Writing Proposition~\ref{prop:negativeheight} in terms of $\overline{t}_k$ gives
    \begin{subequations}
    \label{eq:polyrec_neg_t}
    \begin{align}
        p_{n,n} &= 1,\\
        p_{n,j} &= p_{n-1,j-1} + \sum_{k=1}^{n-j} \overline{t}_k p_{n-k,j} \quad\text{for}\quad 1 \le j \le n-1,\\
        p_{n,0} &= \sum_{k=1}^n \overline{t}_k p_{n-k,0}, \,\text{and}\\
        p_{0,0} &= 1\>.
    \end{align}
    \end{subequations}
    If all $\overline{t}_k$ are positive then $p_{n,j}$ must be positive for all $n$ and $j$. Hence, the maximal characteristic height is attained when $\overline{t}_k$ is maximal, or equivalently $t_k$ is minimal and negative. Thus $t_k = \min{F} = a$ gives maximal characteristic height.

    Next, consider when $|b| \ge |a|$. Since $a < b$ we find $b > 0$. By Proposition~\ref{prop:negativeheight} we know that the characteristic height of $\mathbf{M}_n$ is equal to the characteristic height of $-\mathbf{M}_n$. Rewriting Proposition~\ref{prop:charpolyrec2_UHT} for $-\mathbf{M}_n$ by substituting $p_{n,j}$ with $(-1)^{n-j} p_{n,j}$ we find the recurrence for the characteristic polynomial of $-\mathbf{M}_n$:
    \begin{subequations}
    \label{eq:polyrec_neg_mat}
    \begin{align}
        p_{n,n} &= 1,\\
        p_{n,j} &= p_{n-1,j-1} + \sum_{k=1}^{n-j} (-1)^{k-1} t_k p_{n-k,j} \quad\text{for}\quad 1 \le j \le n-1,\\
        p_{n,0} &= \sum_{k=1}^n (-1)^{k-1} t_k p_{n-k,0}, \,\text{and}\\
        p_{0,0} &= 1\>.
    \end{align}
    \end{subequations}
    Separating out the even and odd values of $k$ in the sums we can write the recurrence as
    \begin{subequations}
    \label{eq:polyrec2_odd_even}
    \begin{align}
        p_{n,n} &= 1,\\
        p_{n,j} &= p_{n-1,j-1} + \sum_{k \text{ odd}}^{n-j} t_k p_{n-k,j} - \sum_{k \text{ even}}^{n-j} t_k p_{n-k,j} \quad\text{for}\quad 1 \le j \le n-1,\\
        p_{n,0} &= \sum_{k \text{ odd}}^n t_k p_{n-k,0} - \sum_{k \text{ even}}^n t_k p_{n-k,0}, \,\text{and}\\
        p_{0,0} &= 1\>.
    \end{align}
    \end{subequations}
    The odd sums are maximal for $t_k = \max{F} = b$ and the even sums are maximal for $t_k = \min{F} = a$. Hence, the maximal characteristic height is attained for $t_k = b$ when $k$ is odd, and $t_k = a$ when $k$ is even.
    
    When $|a| = |b|$, equations~\eqref{eq:polyrec_neg_t} and~\eqref{eq:polyrec2_odd_even} are equivalent and the maximal height is attained both when $t_k = b$ for all $k$, and $t_k = b$ for $k$ odd and $t_k = a$ for $k$ even.
\end{proof}

\begin{proposition}
    \label{prop:maxheight_UHT2}
    $\mathbf{M}_n$ also attains maximal characteristic height when $t_k = (-1)^{k-1}$ for $1 \le k \le n$.
\end{proposition}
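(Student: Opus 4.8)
The plan is to show that the matrix obtained by setting $t_k = (-1)^{k-1}$ produces, after a height-preserving negation, exactly the coefficient recurrence of the already-established maximal case $t_k = -1$ of Theorem~\ref{thm:maxheight_UHT}. Write $\mathbf{N}_n$ for the matrix with $t_k = (-1)^{k-1}$. First I would invoke Proposition~\ref{prop:negativeheight} to replace $\mathbf{N}_n$ by $-\mathbf{N}_n$, since the characteristic height is unchanged; the advantage is that the coefficient recurrence~\eqref{eq:polyrec_neg_mat} for $-\mathbf{M}_n$ is then the natural tool, as it already carries the factor $(-1)^{k-1}t_k$ inside each sum.

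The key observation is that substituting $t_k = (-1)^{k-1}$ into~\eqref{eq:polyrec_neg_mat} collapses every coefficient, because $(-1)^{k-1}t_k = (-1)^{2(k-1)} = 1$ for all $k$. The recurrence for the coefficients of the characteristic polynomial of $-\mathbf{N}_n$ therefore reduces to
\begin{equation*}
p_{n,j} = p_{n-1,j-1} + \sum_{k=1}^{n-j} p_{n-k,j}, \qquad p_{n,0} = \sum_{k=1}^n p_{n-k,0}, \qquad p_{n,n}=1, \qquad p_{0,0}=1.
\end{equation*}
I would then compare this with the recurrence of Proposition~\ref{prop:charpolyrec2_UHT} specialized to $t_k = -1$: there each $-t_k = 1$, and the recurrence together with its initial data is word-for-word identical. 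Since both recurrences determine their coefficient sequences uniquely and agree at every step, the characteristic polynomial of $-\mathbf{N}_n$ equals that of the all-$(-1)$ matrix, which has maximal height by Theorem~\ref{thm:maxheight_UHT}. Proposition~\ref{prop:negativeheight} then transfers this back to $\mathbf{N}_n$, finishing the argument.

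I expect the only real obstacle to be bookkeeping: one must verify that the two recurrences agree not merely for the generic coefficients $p_{n,j}$ with $1 \le j \le n-1$, but also for the constant-term equation $p_{n,0}$ and the boundary conditions, so that the coefficient sequences coincide for all $n$ and $j$ rather than on a subrange. As a one-line cross-check I would note that the statement also falls out immediately as the boundary case $|a|=|b|$ of Proposition~\ref{prop:max_height_set} with $F=\{-1,0,+1\}$, where $a=-1$, $b=+1$, and the prescription ``$t_k=b$ for $k$ odd, $t_k=a$ for $k$ even'' is precisely $t_k=(-1)^{k-1}$; nonetheless I prefer the direct recurrence argument, since it additionally reveals that the characteristic polynomial itself, not merely its height, is preserved.
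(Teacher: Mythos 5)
Your proof is correct, but it takes a genuinely different route from the paper's. The paper disposes of this proposition in one line: it invokes Proposition~\ref{prop:max_height_set} with $F=\{-1,0,+1\}$, $a=-1$, $b=+1$, under which the prescription ``$t_k=b$ for $k$ odd, $t_k=a$ for $k$ even'' is precisely $t_k=(-1)^{k-1}$ --- exactly the cross-check you mention and set aside. You instead bypass the general extremal analysis entirely and argue by exact coincidence of recurrences: substituting $t_k=(-1)^{k-1}$ into the negated-matrix recurrence~\eqref{eq:polyrec_neg_mat} collapses $(-1)^{k-1}t_k$ to $1$, making it agree term-for-term (including the $p_{n,0}$ equation and the boundary data, which you rightly flag as the part needing care) with Proposition~\ref{prop:charpolyrec2_UHT} specialized to $t_k=-1$; uniqueness of the solution then gives equality of the characteristic polynomials of $-\mathbf{N}_n$ and $\widetilde{\mathbf{M}}_n$, and Proposition~\ref{prop:negativeheight} with Theorem~\ref{thm:maxheight_UHT} finishes. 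What each approach buys: the paper's is shorter and inherits the full generality of Proposition~\ref{prop:max_height_set} over arbitrary closed bounded populations $F$; yours proves strictly more than the stated claim, namely that the two extremal matrices share the same characteristic polynomial coefficients up to the sign pattern $(-1)^{n-j}$ --- which is exactly the assertion the paper later uses without proof (``The characteristic polynomial of $\mathbf{M}_n$ when $t_k = -1$ has the same coefficients as the characteristic polynomial of $\mathbf{M}_n$ for $t_k = (-1)^{k-1}$ up to a sign change'') when showing $\mu_n$ is the same for all maximal-height matrices. One bookkeeping caveat: equation~\eqref{eq:polyrec_neg_mat} is derived \emph{inside} the proof of Proposition~\ref{prop:max_height_set} (via the substitution $p_{n,j}\mapsto(-1)^{n-j}p_{n,j}$), so your argument leans on that derivation rather than on the proposition's conclusion; this is legitimate, and the substitution is easy enough to re-derive standalone if you want the proof fully self-contained.
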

\begin{proof}
    By Proposition~\ref{prop:max_height_set}, we have $F = \{-1, 0, {+1}\}$ with $a = -1$, and $b = +1$. Thus $\mathbf{M}_n$ is also of maximal characteristic height for $t_k = b = +1$ for odd values of $k$, and $t_k = a = -1$ for even values of $k$.
\end{proof}

% \begin{remark}
%     Propositions~\ref{thm:maxheight_UHT} and~\ref{prop:maxheight_UHT2} give four solutions where $\mathbf{M}_n$ attains maximal characteristic height:
%     \begin{enumerate}
%         \item $s = 1$ and $t_k = -1$ for $1 \le k \le n$,
%         \item $s = -1$ and $t_k = (-1)^k$ for $1 \le k \le n$,
%         \item $s = -1$ and $t_k = 1$ for $1 \le k \le n$, and
%         \item $s = 1$ and $t_k = (-1)^{k+1}$ for $1 \le k \le n$.
%     \end{enumerate}
% \end{remark}

\begin{proposition}
    The maximum characteristic height grows at least exponentially in $n$.
\end{proposition}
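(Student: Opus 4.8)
The plan is to sidestep any analysis of the full coefficient array and instead reduce the statement to the growth of a single, easily-tracked coefficient. By definition the characteristic height of $\mathbf{M}_n$ equals $\max_{0 \le j \le n} |p_{n,j}|$, so the absolute value of \emph{any one} coefficient is already a lower bound for the height. It therefore suffices to exhibit one coefficient, of one maximal-height matrix, that grows exponentially in $n$. By Theorem~\ref{thm:maxheight_UHT} the maximal height is attained at $t_k = -1$ for all $k$, so I would fix that choice throughout.

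First I would track the constant term $p_{n,0}$, which is attractive precisely because its recurrence~\eqref{eqn:them2_cor_UHT_c} is self-contained: it involves only other constant terms and never couples to the higher coefficients. Setting $t_k = -1$ turns~\eqref{eqn:them2_cor_UHT_c} into
\begin{equation}
  p_{n,0} = \sum_{k=1}^{n} p_{n-k,0} = \sum_{m=0}^{n-1} p_{m,0},
\end{equation}
a pure running-sum recurrence with initial value $p_{0,0} = 1$ from~\eqref{eqn:them2_cor_UHT_d}.

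Next I would collapse the running sum into a two-term recurrence. Writing $c_n := p_{n,0}$ and subtracting consecutive instances gives $c_n - c_{n-1} = c_{n-1}$, hence $c_n = 2c_{n-1}$ for $n \ge 2$, together with $c_1 = c_0 = 1$. A one-line induction then yields $c_n = 2^{n-1}$ for all $n \ge 1$. Since the height is at least $|c_n|$, the maximal characteristic height is bounded below by $2^{n-1}$, which is exponential in $n$, and the claim follows.

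I do not expect a genuine technical obstacle in the execution; the real content is the observation that one should chase a single, \emph{decoupled} coefficient rather than the elusive maximal one. The only points needing care are the sign bookkeeping when substituting $t_k = -1$ into~\eqref{eqn:them2_cor_UHT_c}, and the (inductively obvious) fact that every $c_n$ stays nonnegative, so that no cancellation is hidden in the telescoping. If a sharper base were desired I would instead sum all coefficients: since the $t_k = -1$ recurrence of Proposition~\ref{prop:charpolyrec2_UHT} keeps every $p_{n,j}$ positive, one has $\sum_j p_{n,j} = P_n(1)$, and setting $z = 1$ in~\eqref{eqn:thm1_UHT} and telescoping gives $P_n(1) = 3P_{n-1}(1) - P_{n-2}(1)$, whose dominant root is $(3+\sqrt{5})/2 \approx 2.618$; dividing by the $n+1$ coefficients still leaves an exponential lower bound, now with this larger base. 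The constant-term route is the shortest path to an explicit exponential bound, so that is the one I would write up.
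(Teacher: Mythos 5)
Your proposal is correct and follows essentially the same route as the paper's proof: fix $t_k = -1$, note that equation~\eqref{eqn:them2_cor_UHT_c} decouples into $p_{n,0} = \sum_{k=1}^{n} p_{n-k,0}$, and conclude $p_{n,0} = 2^{n-1}$ as a lower bound for the height (you merely spell out the telescoping step that the paper states without proof). Your closing remark about evaluating $P_n(1)$, whose recurrence $P_n(1) = 3P_{n-1}(1) - P_{n-2}(1)$ has dominant root $(3+\sqrt{5})/2 = 1+\varphi$, is a nice bonus beyond the paper's argument, since it gives (up to the polynomial factor $n+1$) a lower bound matching the base in the paper's conjecture.
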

\begin{proof}
    When $t_k = -1$, the characteristic height is maximal by Theorem~\ref{thm:maxheight_UHT}.
    Equation~\eqref{eqn:them2_cor_UHT_c} from Proposition~\ref{prop:charpolyrec2_UHT} reduces to
    \begin{equation}
        p_{n,0} = \sum_{k=1}^n p_{n-k,0} = 2^{n-1}
    \end{equation}
    for $n \ge 1$ with $p_{0,0} = 1$ by equation~\eqref{eqn:them2_cor_UHT_d}. Thus, the maximal characteristic height must grow at least exponentially in $n$.
\end{proof}

\begin{conjecture}
    The maximum characteristic height approaches $C (1 + \varphi)^n$ as $n \to \infty$ for some constant $C$ where $\varphi$ is the golden ratio.
\end{conjecture}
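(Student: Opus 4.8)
The plan is to work with the explicit maximal-height family $t_k=-1$, which is maximal by Theorem~\ref{thm:maxheight_UHT}; here the recurrence of Proposition~\ref{prop:charpolyrec1_UHT} becomes $P_n(z)=zP_{n-1}(z)+\sum_{k=1}^n P_{n-k}(z)$ and all coefficients $p_{n,j}$ are positive. First I would pass to the bivariate generating function $G(z,x)=\sum_{n\ge0}P_n(z)x^n$; summing the recurrence gives the closed form
\begin{equation}
G(z,x)=\frac{1-x}{1-(z+2)x+zx^2}.
\end{equation}
For each fixed $z>0$ the dominant singularity in $x$ is the smaller (positive, simple) root of $zx^2-(z+2)x+1$, and the numerator does not vanish there, so meromorphic singularity analysis yields $P_n(z)\sim C(z)\,\rho(z)^n$ with
\begin{equation}
\rho(z)=\frac{z+2+\sqrt{z^2+4}}{2}.
\end{equation}
The key arithmetic observation is that $\rho(1)=\tfrac{3+\sqrt5}{2}=1+\varphi=\varphi^2$, exactly the base appearing in the statement.

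The exponential rate itself can be pinned down rigorously and cheaply, bypassing any saddle point. Let $h_n$ be the characteristic height (the largest $p_{n,j}$). Since all $n+1$ coefficients are positive,
\begin{equation}
\frac{P_n(1)}{n+1}\le h_n\le P_n(1).
\end{equation}
Evaluating at $z=1$ gives $G(1,x)=(1-x)/(1-3x+x^2)$, whose coefficients obey $g_n=3g_{n-1}-g_{n-2}$; one checks $P_n(1)=g_n=F_{2n+1}\sim\tfrac{\varphi}{\sqrt5}(1+\varphi)^n$. Because $(n+1)^{1/n}\to1$, the sandwich forces
\begin{equation}
\lim_{n\to\infty}\frac1n\ln h_n=\ln(1+\varphi),
\end{equation}
so the maximum characteristic height grows like $(1+\varphi)^n$ to exponential order. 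This already sharpens the earlier ``at least exponential'' proposition to the exact base.

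To obtain the multiplicative constant $C$ I would locate the largest coefficient precisely. Experimentally the maximiser sits at $j^\ast\approx\alpha n$ with $\alpha=\rho'(1)/\rho(1)\approx0.276$, i.e.\ in the interior, so $h_n$ is the peak of the coefficient sequence rather than an endpoint value. I would therefore establish a local central limit theorem for the normalised coefficients $p_{n,j}/P_n(1)$: the quasi-power form $P_n(z)\sim C(z)\rho(z)^n$, valid uniformly for $z$ in a complex neighbourhood of $1$, gives a Gaussian local law with mean $\mu n$ and variance $\sigma^2 n$, where $\mu,\sigma^2$ are the first two derivatives of $\ln\rho(e^s)$ at $s=0$. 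The peak value would then be
\begin{equation}
h_n\sim\frac{P_n(1)}{\sigma\sqrt{2\pi n}}\sim\frac{\varphi}{\sigma\sqrt{10\pi}}\,\frac{(1+\varphi)^n}{\sqrt n}.
\end{equation}

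The main obstacle is precisely this last step, and it carries a warning about the statement itself. A rigorous local law needs (i) uniformity of the singularity analysis in $z$ near $1$, with no competing singularity colliding with the dominant pole, and (ii) unimodality of $(p_{n,j})_j$, so that the analytic peak really is the maximum coefficient. More importantly, the computation predicts a factor $n^{-1/2}$: the height should behave like $C\,n^{-1/2}(1+\varphi)^n$ rather than $C(1+\varphi)^n$, and a slowly varying $n^{-1/2}$ is easily mistaken for a constant in finite data. I therefore expect a complete argument to confirm the conjectured base $1+\varphi$ but to \emph{refine} the conjecture to $h_n\sim C\,n^{-1/2}(1+\varphi)^n$ with $C=\varphi/(\sigma\sqrt{10\pi})$; making the local limit law quantitative enough to produce this constant, and to settle the $n^{-1/2}$ correction conclusively, is the delicate part.
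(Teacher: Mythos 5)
You should first note that the statement you were given is a \emph{conjecture}: the paper offers no proof at all, only the numerical evidence of Figure~\ref{fig:maxheightratiolog} (log-differences of $\tau_n$ up to $n=50{,}000$). So there is no paper argument to compare against, and your proposal must be judged on its own merits. Its rigorous half is correct and in fact delivers more than the paper claims anywhere: the bivariate generating function $G(z,x)=(1-x)/\bigl(1-(z+2)x+zx^2\bigr)$ checks out against the recurrence $P_n=zP_{n-1}+\sum_{k=1}^{n}P_{n-k}$, the evaluation $P_n(1)=F_{2n+1}$ is right, and the sandwich $P_n(1)/(n+1)\le h_n\le P_n(1)$ (valid since all $p_{n,j}>0$, and applicable to the global maximum height by Theorem~\ref{thm:maxheight_UHT}) rigorously pins the exponential base to $1+\varphi$, sharpening the paper's $p_{n,0}=2^{n-1}$ lower bound. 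One simplification you missed: you do not need singularity analysis at all, because the paper's last proposition of Section~5 proves the exact two-term formula $P_n(z)=\lambda_+^n\bigl(\tfrac12+\tfrac{z}{4\Delta}\bigr)+\lambda_-^n\bigl(\tfrac12-\tfrac{z}{4\Delta}\bigr)$ with $\lambda_\pm=1+\tfrac z2\pm\Delta$, $\Delta=\sqrt{1+z^2/4}$; your $\rho(z)$ is precisely $\lambda_+(z)$, and since $|\lambda_-(z)|<|\lambda_+(z)|$ for $z$ in a complex neighbourhood of $1$ (the roots collide only at $z=\pm 2i$), the quasi-power form $P_n(z)=c_+(z)\lambda_+(z)^n\bigl(1+O(\theta^n)\bigr)$ holds \emph{exactly and uniformly}. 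That closes your gap~(i) completely; the only genuine gap remaining is~(ii), the local limit theorem (or unimodality of $(p_{n,j})_j$) needed to convert the Gaussian coefficient profile into an asymptotic for the peak coefficient. That gap is real and you correctly flag it as the delicate step.

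Your most valuable observation is the predicted $n^{-1/2}$ factor, and the paper's own data quietly supports you. From the table, $\tau_n/F_{2n+1}$ decays steadily ($0.40, 0.38, \ldots, 0.23$ for $n=2,\ldots,10$), while $\sqrt{n}\,\tau_n/F_{2n+1}$ drifts toward your predicted $1/(\sigma\sqrt{2\pi})\approx 0.77$ (it is already $\approx 0.73$ at $n=10$); and your predicted peak location $\rho'(1)/\rho(1)=(5-\sqrt5)/10=1/(\varphi+2)\approx 0.2764$ matches exactly the empirical growth rate of $\mu_n$ recorded in the paper's remarks, a strong consistency check. Since Figure~\ref{fig:maxheightratiolog} plots only $\log\tau_{n+1}-\log\tau_n$, it is blind to any slowly varying factor, so the paper's evidence cannot distinguish $C(1+\varphi)^n$ from $C\,n^{-1/2}(1+\varphi)^n$; if your local law holds, the conjecture as literally stated is false (the ratio $\tau_n/(1+\varphi)^n$ tends to $0$) and should be amended to $\tau_n\sim C\,n^{-1/2}(1+\varphi)^n$ with $C=\varphi/(\sigma\sqrt{10\pi})$. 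In short: not a complete proof, as you acknowledge, but the right program; its elementary core already upgrades the paper's ``at least exponential'' proposition to the exact base $1+\varphi$, and it exposes a likely defect in the conjecture's stated form.
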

\begin{remark}
    This limit is illustrated in Figure~\ref{fig:maxheightratiolog}, motivating this conjecture.
\end{remark}
\begin{figure}[h]
    \centering
    \includegraphics[width=\textwidth]{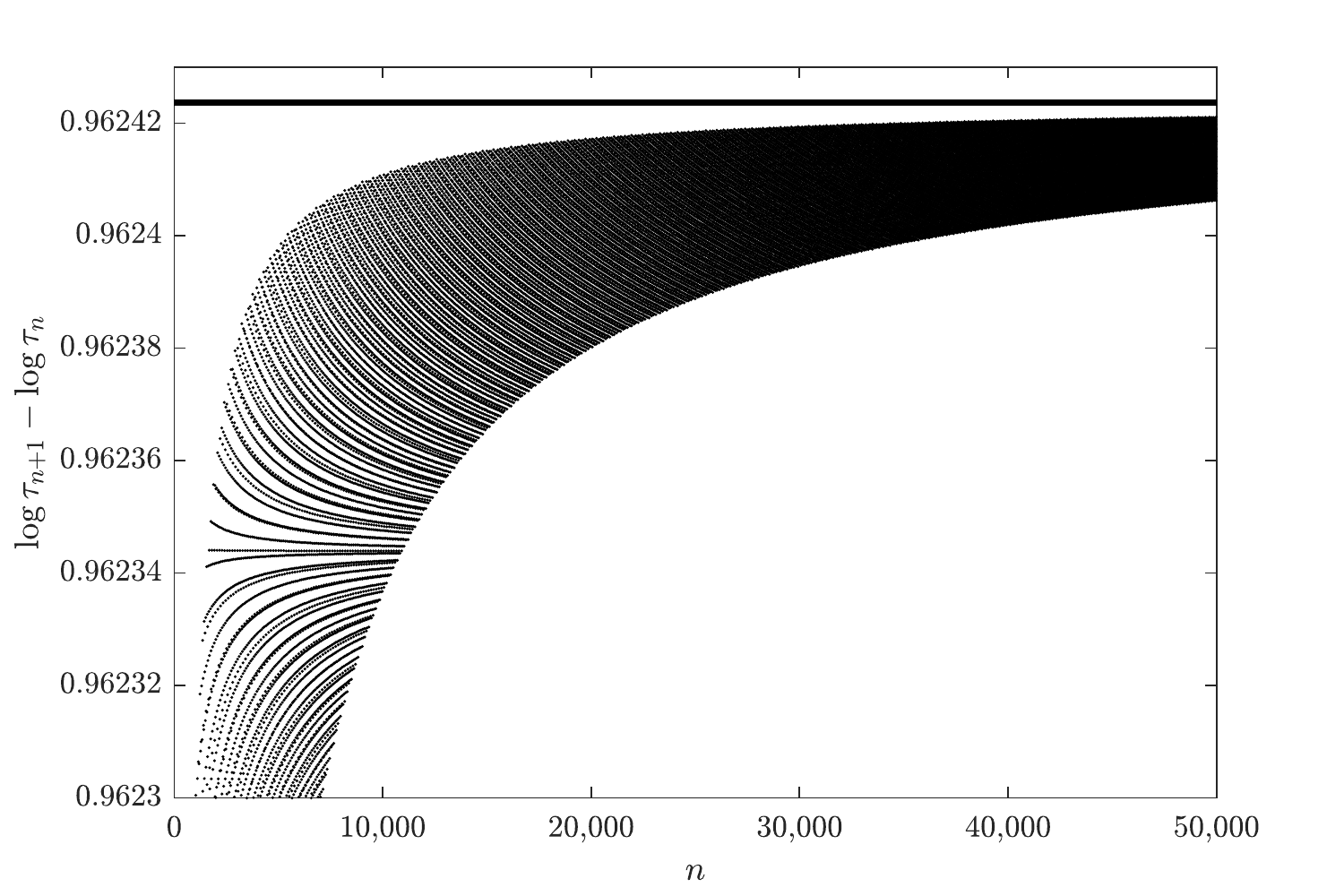}
    \caption{The points are $\log{\tau_{n+1}} - \log{\tau_n}$ for $n$ from 0 to 50,000 where $\tau_n$ is the maximal characteristic height of $\mathbf{M}_n$ (i.e. when $t_k = -1$, for example). The solid line is $\log(1 + \varphi)$ where $\varphi$ is the golden ratio.}
    \label{fig:maxheightratiolog}
\end{figure}

\begin{proposition}
    \label{prop:heightcoeffs}
    Let $\overline{\mathbf{M}}_n$ be of maximal characteristic height and let $\mu_n$ be the degree of the term of the characteristic polynomial of  $\overline{\mathbf{M}}_n$ corresponding to the height. The characteristic height of $\overline{\mathbf{M}}_n$ is independent of $t_j$ for $j > n - \mu_n$.
\end{proposition}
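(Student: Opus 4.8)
The plan is to reduce the statement to the structural fact recorded in Proposition~\ref{prop:funof}, and then use the maximality hypothesis to upgrade invariance of a single coefficient into invariance of the whole height. Write the characteristic polynomial of $\overline{\mathbf{M}}_n$ as $P_n(z) = \sum_{k=0}^n p_{n,k} z^k$, let $\tau_n$ denote the maximal characteristic height over the family, and recall that the characteristic height is $\max_{0 \le k \le n} |p_{n,k}|$. Since $\overline{\mathbf{M}}_n$ attains the maximum and $\mu_n$ is by definition the degree realizing it, the first step is simply to record the identity $|p_{n,\mu_n}| = \tau_n$.

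The second step is to apply Proposition~\ref{prop:funof} with $i = \mu_n$, which tells us that $p_{n,\mu_n}$ is a function of $t_1, \ldots, t_{n-\mu_n}$ alone and does not depend on $t_j$ for $j > n - \mu_n$. Hence, if we perturb any of the entries $t_j$ with $j > n - \mu_n$ (keeping them in $\{-1,0,+1\}$), the coefficient $p_{n,\mu_n}$ is unchanged and still has absolute value $\tau_n$.

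The third and decisive step supplies a two-sided bound. On one hand, after any such perturbation the matrix remains an upper Hessenberg Toeplitz matrix of the prescribed form, so its characteristic height is at most $\tau_n$ by definition of the maximum. On the other hand, since the coefficient $p_{n,\mu_n}$ retains absolute value $\tau_n$, the height of the perturbed polynomial is at least $\tau_n$. The two inequalities force the perturbed height to equal $\tau_n$, so the characteristic height is unchanged, which is exactly the asserted independence from $t_j$ for $j > n - \mu_n$.

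The main obstacle is conceptual rather than computational: invariance of the single coefficient $p_{n,\mu_n}$ does not by itself imply invariance of the height, because perturbing $t_j$ could in principle enlarge some lower-degree coefficient $|p_{n,k}|$ with $k < \mu_n$ and thereby raise the maximum. The role of the maximality hypothesis is precisely to exclude this: no matrix in the family can exceed height $\tau_n$, so the preserved coefficient pins the height from below while maximality pins it from above. Recognizing that both bounds are needed---and that the upper bound is furnished by maximality, not by Proposition~\ref{prop:funof}---is the crux of the argument.
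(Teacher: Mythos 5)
Your proof is correct and follows essentially the same route as the paper's: apply Proposition~\ref{prop:funof} to the coefficient $p_{n,\mu_n}$ to see it is unchanged by perturbing $t_j$ with $j > n-\mu_n$, then invoke maximality over the family to cap the height of the perturbed matrix at $\tau_n$. Your write-up is in fact slightly more explicit than the paper's (which compresses the two-sided bound into the single assertion $|p_{n,k}| \le |p_{n,\mu_n}|$ for $k < \mu_n$), and you correctly identify that the upper bound comes from maximality rather than from Proposition~\ref{prop:funof}.
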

\begin{proof}
    Let $P_n$ be the characteristic polynomial of $\overline{\mathbf{M}}_n$.
    By Proposition~\ref{prop:funof}, $p_{n,\mu_n}$ is independent of $t_j$ for $j > n -  \mu_n$. Thus, $t_j$ for $j > n -  \mu_n$ only affects $p_{n,k}$ for $k <  \mu_n$. Since $\overline{\mathbf{M}}_n$ is of maximal height, $|p_{n,k}| \le |p_{n,  \mu_n}|$ for $k < \mu_n$ for all $t_j\in \{-1, 0, +1\}$ with $j > n -  \mu_n$.
\end{proof}

\begin{table}[h!]
  \begin{center}
    \begin{tabular}{ccc}
    \hline
      $n$ & $\mu_n$ & $\tau_n$\\ \hline
      2 & 1 & 2\\
      3 & 1 & 5\\
      4 & 1 & 12\\
      5 & 1 & 27\\
      6 & 2 & 66\\
      7 & 2 & 168\\
      8 & 2 & 416\\
      9 & 2 & 1,008\\
      10 & 3 & 2,528\\ \hline
    \end{tabular}
  \end{center}
  \label{tab:max_char_height}
  \caption{Maximum height $\tau_n$ and degree of term of characteristic polynomial corresponding to maximum height $\mu_n$ upper Hessenberg Toeplitz matrices for $n$ from 2 to 10.}
\end{table}

\begin{proposition}
    For fixed $n$, $\mu_n$ is the same for all matrices $\overline{\mathbf{M}}_n$ of maximal characteristic height.
\end{proposition}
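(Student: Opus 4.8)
The plan is to reduce the location of the maximal coefficient of \emph{every} maximal-height matrix to the location of the maximum in one fixed sequence, the coefficient sequence of the all-$(-1)$ matrix. Write $p_{n,j}(\mathbf{t})$ for the coefficients of the characteristic polynomial of $\mathbf{M}_n$ with parameters $\mathbf{t}=(t_1,\dots,t_n)$, and set $a_{n,j}:=p_{n,j}(-\mathbf{1})$ for the choice $t_k=-1$, $1\le k\le n$. Specializing Proposition~\ref{prop:charpolyrec2_UHT} to $t_k=-1$ gives $a_{n,j}=a_{n-1,j-1}+\sum_{k=1}^{n-j}a_{n-k,j}$ with $a_{n,n}=a_{0,0}=1$, so every $a_{n,j}\ge 0$, and by Theorem~\ref{thm:maxheight_UHT} the maximal height is $\tau_n=\max_j a_{n,j}$.

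First I would prove a coefficient-wise domination lemma: for every $\mathbf{t}\in\{-1,0,1\}^n$ and every $j$, $|p_{n,j}(\mathbf{t})|\le a_{n,j}$. This is an induction on $n$. Using Proposition~\ref{prop:charpolyrec2_UHT} and the triangle inequality,
\begin{equation*}
|p_{n,j}(\mathbf{t})|\le |p_{n-1,j-1}(\mathbf{t})|+\sum_{k=1}^{n-j}|t_k|\,|p_{n-k,j}(\mathbf{t})|\le a_{n-1,j-1}+\sum_{k=1}^{n-j}a_{n-k,j}=a_{n,j},
\end{equation*}
since $|t_k|\le 1$ and the nonnegative $a_{n,j}$ obey the same recurrence with all signs positive; the boundary cases $j=n$ (where $p_{n,n}=1=a_{n,n}$) and $j=0$ (where $|p_{n,0}|\le\sum_k a_{n-k,0}=a_{n,0}$) follow from the corresponding base formulas. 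Proposition~\ref{prop:funof} guarantees that each subcoefficient appearing on the right is genuinely a function of the listed $t$'s, so the bound holds uniformly in $\mathbf{t}$.

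Next I would extract the statement about $\mu_n$. Let $A=\{j:a_{n,j}=\tau_n\}$ be the set of degrees realizing the maximum of the fixed sequence. If $\overline{\mathbf{M}}_n$ is of maximal height with parameters $\overline{\mathbf{t}}$, there is a $j^\ast$ with $|p_{n,j^\ast}(\overline{\mathbf{t}})|=\tau_n$; the chain $\tau_n=|p_{n,j^\ast}(\overline{\mathbf{t}})|\le a_{n,j^\ast}\le\tau_n$ forces $a_{n,j^\ast}=\tau_n$, i.e.\ $j^\ast\in A$. Conversely, at any $j\notin A$ we have $|p_{n,j}(\overline{\mathbf{t}})|\le a_{n,j}<\tau_n$, so $\overline{\mathbf{M}}_n$ cannot realize its height there. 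Hence the degree(s) realizing the height of \emph{any} maximal matrix lie exactly in $A$, a set depending only on $n$ and not on the particular maximal matrix.

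The remaining, and genuinely hard, step is to show $A$ is a singleton for $n\ge 3$, so that $\mu_n$ is unambiguously its unique element. I expect this to be the main obstacle: it amounts to proving that $a_{n,0},\dots,a_{n,n}$ is strictly unimodal with a single peak. The natural route is log-concavity — showing the ratios $a_{n,j+1}/a_{n,j}$ are strictly decreasing and never equal to $1$ — which I would attempt from the defining recurrence by induction on $n$. Two boundary phenomena need care: the constant term $a_{n,0}=2^{n-1}$ ties the peak only at $n=2$ (where $2^{n-1}=\tau_n$) and is strictly below $\tau_n$ for $n\ge 3$; and the exceptional case $n=2$ (where $A=\{0,1\}$, yet every maximal matrix still realizes its height at degree $1$ because $t_1=\pm1$ forces $|p_{2,1}|=\tau_2$, matching the tabulated $\mu_2=1$) should be checked directly. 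Once $A=\{\mu_n\}$ is established for $n\ge 3$, combining it with the previous paragraph shows every maximal $\overline{\mathbf{M}}_n$ realizes its height at the single degree $\mu_n$, which proves the claim and retro-justifies the well-definedness of $\mu_n$ used in Proposition~\ref{prop:heightcoeffs}.
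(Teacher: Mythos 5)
Your first two steps are correct and are in fact a clean, coefficient-wise strengthening of Theorem~\ref{thm:maxheight_UHT}: the domination lemma $|p_{n,j}(\mathbf{t})|\le a_{n,j}$ follows by the induction you give (with Proposition~\ref{prop:funof} justifying the truncated dependence), and it does show that any degree at which any maximal-height matrix attains $\tau_n$ must lie in $A=\argmax_j a_{n,j}$. But the proposal is not a proof, and you say so yourself: the step you defer --- that $A$ is a singleton for $n\ge 3$, i.e.\ that the row $a_{n,0},\dots,a_{n,n}$ of OEIS A105306 has a strictly unique peak --- is the entire content of the proposition on your route, and ``the natural route is log-concavity, which I would attempt by induction'' is a plan, not an argument. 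Without it, domination gives only containment in $A$; if $|A|\ge 2$, two different maximal matrices could attain $\tau_n$ at different elements of $A$, and nothing you prove excludes this. Your own $n=2$ case (where $a_{2,0}=a_{2,1}=2$, so $A=\{0,1\}$ and you need the ad hoc observation that $|p_{2,1}|=2$ is forced) shows exactly the failure mode: ties require extra structural information beyond domination, and ruling out ties at the peak for \emph{every} $n$ is a nontrivial claim about this specific sequence that neither you nor, notably, the paper establishes or needs. (Your suspicion that the near-peak ratios are delicate is well founded; e.g.\ at $n=6$ the top coefficients are $64$ and $66$.)

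The paper's proof avoids unimodality entirely, by rigidity of the maximizers rather than analysis of one fixed sequence: the two seed patterns $t_k=-1$ and $t_k=(-1)^{k-1}$ (Theorem~\ref{thm:maxheight_UHT} and Proposition~\ref{prop:maxheight_UHT2}) have coefficient sequences identical up to sign, so they share the same peak location, and by Proposition~\ref{prop:heightcoeffs} changing any $t_j$ with $j>n-\mu_n$ alters only coefficients of degree below $\mu_n$, which by maximality can never exceed $|p_{n,\mu_n}|$; hence the peak index persists across all maximal matrices, which (per the classification made explicit in Theorem~\ref{thm:count_max_height_uht}) are exactly tail perturbations of the two seeds. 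Note that even this argument only gives $|p_{n,k}|\le|p_{n,\mu_n}|$ for $k<\mu_n$, so ties below the peak can appear; the paper's route survives them if $\mu_n$ is read as the largest realizing degree, whereas your route cannot tolerate any tie at the top without the missing lemma. To repair your proposal in the spirit of the paper, replace the unimodality step by the tail-independence argument of Proposition~\ref{prop:heightcoeffs}: it is strictly easier and already available.
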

\begin{proof}
The characteristic polynomial of $\mathbf{M}_n$ when $t_k = -1$ has the same coefficients as the characteristic polynomial of $\mathbf{M}_n$ for $t_k = (-1)^{k-1}$ up to a sign change.
    % When $(-1)^k t_k = 1$, the characteristic polynomial of $\overline{\mathbf{M}}_n$ is the same for both solutions.
    % When $s^{k-1}t_k = (-1)^{k-1}$, the coefficients of the characteristic polynomial of $\overline{\mathbf{M}}_n$ are the same as when $s^{k-1}t_k = -1$ up to a sign change.
    By Proposition~\ref{prop:heightcoeffs}, changing any of the entries $t_j$ of $\overline{\mathbf{M}}_n$ for $j > n-\mu_n$ does not affect the value of $\mu_n$. Therefore $\mu_n$ is fixed.
\end{proof}

\begin{theorem}
\label{thm:count_max_height_uht}
The number of upper Hessenberg Toeplitz matrices of dimension $n$ with $t_k \in \{-1, 0, {+1}\}$ for $1 \le k \le n$ of maximal characteristic height is $2\cdot3^{\mu_n}$.
\end{theorem}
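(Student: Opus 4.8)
The plan is to split the parameter vector $(t_1,\dots,t_n)$ of each matrix into a \emph{determining block} $(t_1,\dots,t_m)$ with $m = n-\mu_n$ and a \emph{free block} $(t_{m+1},\dots,t_n)$, and to show that maximal characteristic height holds exactly when the determining block is one of two admissible patterns and the free block is arbitrary. By Proposition~\ref{prop:funof} the coefficient $p_{n,\mu_n}$ depends only on $t_1,\dots,t_m$, and by Proposition~\ref{prop:heightcoeffs} the $\mu_n$ entries of the free block may be reassigned without affecting maximality. This immediately isolates the factor $3^{\mu_n}$ and reduces the theorem to the claim that the determining block admits exactly two height-maximizing choices.

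To turn ``maximal height'' into a condition on a single coefficient, I would introduce the envelope recurrence $a_{n,n}=1$, $a_{0,0}=1$, and $a_{n,j}=a_{n-1,j-1}+\sum_{k=1}^{n-j}a_{n-k,j}$, i.e. the recurrence of Proposition~\ref{prop:charpolyrec2_UHT} with every sign made additive. A strong induction on $n$, using the triangle inequality and $|t_k|\le 1$, gives $|p_{n,j}|\le a_{n,j}$ for every matrix and every $j$; a second easy induction shows $a_{\ell,j}>0$ whenever $0\le j\le\ell$. The choice $t_k=-1$ of Theorem~\ref{thm:maxheight_UHT} realizes $p_{n,j}=a_{n,j}$ for all $j$ at once, so $\tau_n=a_{n,\mu_n}$. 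A matrix then has maximal characteristic height if and only if $|p_{n,\mu_n}|=a_{n,\mu_n}$: the forward implication is the definition of $\mu_n$, and conversely $|p_{n,\mu_n}|=a_{n,\mu_n}=\tau_n$ forces the height to be at least, hence equal to, the maximum. This is a condition on $(t_1,\dots,t_m)$ alone.

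It remains to prove that $|p_{n,\mu_n}|$ attains $a_{n,\mu_n}$ at exactly two points of $\{-1,0,1\}^m$, and tracing the equality case is the main obstacle. Writing $p_{n,\mu_n}=p_{n-1,\mu_n-1}-\sum_{k=1}^{m}t_k\,p_{n-k,\mu_n}$ (terms with $k>m$ vanish since $a_{n-k,\mu_n}=0$ there), equality in $|p_{n,\mu_n}|\le a_{n,\mu_n}$ requires every summand to attain its extreme magnitude and to share one common sign $\varepsilon\in\{+,-\}$. This propagates the equality $|p_{n-k,\mu_n}|=a_{n-k,\mu_n}$ down the recursion, forces $t_k\neq 0$ for each $k\le m$ (where $a_{n-k,\mu_n}>0$), and pins the sign of each $t_k$ relative to $\varepsilon$; thus the determining block is completely determined once $\varepsilon$ is fixed, leaving at most two solutions. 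The delicate point is to verify that the sign constraints coming from the various subcoefficients are mutually consistent rather than over-determined, and here I would lean on the fact that two consistent solutions are already known, namely $t_k=-1$ (Theorem~\ref{thm:maxheight_UHT}) and $t_k=(-1)^{k-1}$ (Proposition~\ref{prop:maxheight_UHT2}), which disagree at $t_1$ and so are distinct on the determining block (for $n\ge 2$ one has $\tau_n\ge 2>1=a_{n,n}$, whence $\mu_n<n$ and $m\ge 1$). Since the equality analysis permits at most two maximizers and these two patterns supply two, they are exactly the two. Combining the two admissible determining blocks with the $3^{\mu_n}$ arbitrary free blocks, and noting the resulting two families are disjoint because they differ in $t_1$, yields $2\cdot 3^{\mu_n}$ matrices of maximal characteristic height.
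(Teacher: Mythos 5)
Your scaffolding is sound and goes well beyond the paper, but the centerpiece of your argument---the ``at most two'' bound---has a genuine hole. You claim that equality in $|p_{n,\mu_n}|\le a_{n,\mu_n}$ ``pins the sign of each $t_k$ relative to $\varepsilon$'' so that the determining block is completely determined once the top-level sign $\varepsilon$ is fixed, giving at most one maximizer per choice of $\varepsilon$. That is false: the alternating maximizer $t_k=(-1)^{k-1}$ has coefficients $(-1)^{n-j}$ times those of the all-$(-1)$ matrix, so its top coefficient is $p_{n,\mu_n}=(-1)^{n-\mu_n}a_{n,\mu_n}$, and whenever $n-\mu_n$ is even the two known maximizers share the \emph{same} $\varepsilon$ (e.g.\ $n=6$, $\mu_6=2$: both give $p_{6,2}=+66$). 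The underlying problem is that the top-level equality only yields $t_k=-\varepsilon\cdot\mathrm{sign}(p_{n-k,\mu_n})$, and the signs $\mathrm{sign}(p_{n-k,\mu_n})$ are themselves unknowns to be determined recursively; fixing $\varepsilon$ alone does not determine them. You explicitly defer exactly this consistency analysis and then close with ``the equality analysis permits at most two maximizers and these two patterns supply two, so they are exactly the two''---but the two known patterns supply only the \emph{lower} bound, and the upper bound was the thing being proved. As written the argument is circular at the one step that matters.

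The gap is repairable within your own framework, and the repair shows the correct binary parameter is the sign of $t_1$, not $\varepsilon$: work entirely in the row $j=\mu_n$, writing $\sigma_e$ for the sign of $p_{\mu_n+e,\mu_n}$ in an equality configuration (with $\sigma_0=+1$ since $p_{\mu_n,\mu_n}=1$). Propagating extremality through every node of the dependency tree forces $|p_{\mu_n+d,\mu_n}|=a_{\mu_n+d,\mu_n}$ and $t_{d-e}=-\sigma_d\sigma_e$ for all $0\le e<d\le m$; consistency across $d$ makes $\sigma$ multiplicative, $\sigma_d=\sigma_1^{\,d}$, whence $t_k=-\sigma_1^{\,k}$, i.e.\ exactly the two patterns $t_k\equiv-1$ and $t_k=(-1)^{k-1}$. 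Two further caveats: your forward implication ``maximal height $\Rightarrow |p_{n,\mu_n}|=a_{n,\mu_n}$'' is not just ``the definition of $\mu_n$,'' because the envelope can have ties (e.g.\ $a_{2,0}=a_{2,1}=2$), so you need the paper's proposition that $\mu_n$ is common to all maximal-height matrices, or an explicit tie analysis; and note the count forces $n\ge 2$ (for $n=1$ all three matrices are maximal, while $2\cdot 3^{\mu_1}$ overcounts). For comparison, the paper's own proof does not attempt exhaustiveness at all: it simply combines Theorem~\ref{thm:maxheight_UHT}, Proposition~\ref{prop:maxheight_UHT2} and Proposition~\ref{prop:heightcoeffs} and asserts the count, leaving unproved that no other determining block works. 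Your envelope-plus-equality route, once patched as above, is genuinely stronger than what the paper records; but in its present form the decisive step fails.
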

\begin{proof}
    By Theorem~\ref{thm:maxheight_UHT} and Proposition~\ref{prop:maxheight_UHT2}, there are two matrices that attain maximal characteristic height. By Proposition~\ref{prop:heightcoeffs},
    any combination of $t_j \in \{-1, 0, {+1}\}$ for $j > n - \mu_n$ will not affect the characteristic height. Thus there are $3^{\mu_n}$ combinations of $t_j$ that result in the same characteristic height for each of the two choices of $t_k$ that give maximal characteristic height.
\end{proof}

\begin{remark}
We have found that $\mu_n$ remains constant for 3 or 4 subsequent values of $n$ followed by an increment by 1. We have verified this pattern experimentally up to degree 50,000. Figure~\ref{fig:argmaxidx} shows the pattern for matrix dimension up to 100.
\end{remark}

\begin{figure}[h]
    % See PlotArgmaxIdx.m and PlotArgMaxIdx.mw
    \centering
    \includegraphics[width=\textwidth]{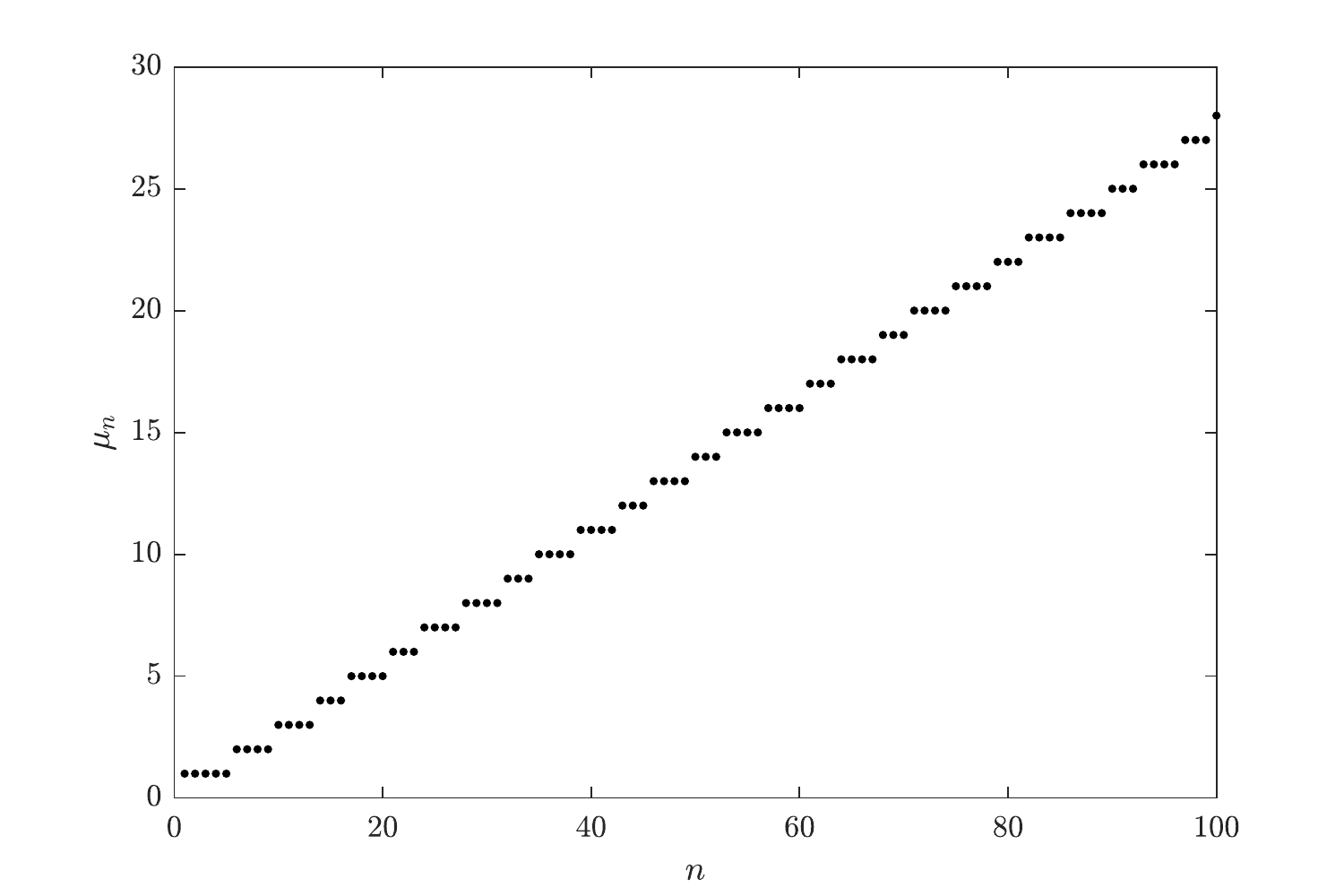}
    \caption{Degree of the term corresponding to the height of the characteristic polynomial of an $n \times n$ upper Hessenberg Toeplitz matrix of maximal characteristic height.}
    \label{fig:argmaxidx}
\end{figure}
    
    % \begin{figure}
    %     \centering
    %     \includegraphics[width=6in]{Figures/MaxHeight.pdf}
    %     \caption{Maximum characteristic height of $n \times n$ upper Hessenberg Toeplitz matrices.}
    % \label{fig:maxheight}
    % \end{figure}

\begin{remark} 
    The sequence $\mu_{n+1} - \mu_n$ is nearly equivalent to the sequence for the generalized Fibonacci word $f^{[3]}$
    \begin{equation}
        a(n) = \floor*{\frac{n+2}{\varphi+2}} - \floor*{\frac{n+1}{\varphi+2}}
    \end{equation}
    (\href{http://oeis.org/A221150}{A221150} on the OEIS). We have found that up to at least degree 50,000, $\mu_{n+1} - \mu_n = a(n+326)$ except when $n \in \{0, 2, 24,148, 24,149\}$.
\end{remark}

\begin{remark}
    The sequence $\mu_n$ is nearly equivalent to the sequence
    \begin{equation}
        \floor*{\frac{n+327}{\varphi+2}} - 90
    \end{equation}
    for $n > 2$. The two sequences are equal for all values up to $n = 50,000$ except when $n = 24,149$.
\end{remark}

The sequences presented in the previous remarks are examples of \textit{high-precision fraud}~\cite{borwein1992strange} requiring evaluation up to dimension 25,000 and nearly 25,000 digits of precision to identity.

% ============================================================================ %
% Maximal Height Characteristic Polynomials                                    %
% ============================================================================ %
\section{Maximal Height Characteristic Polynomials}
In this section we restrict our analysis to specific upper Hessenberg Toeplitz matrices of maximal characteristic height, that is $t_k = -1$ for all $k$. We denote a dimension $n$ matrix of this form by $\widetilde{\mathbf{M}}_n$.
$\widetilde{\mathbf{M}}_n$ is of maximal height by Proposition~\ref{prop:maxheight_UHT2}.

\begin{proposition}
    The characteristic polynomial of $\widetilde{\mathbf{M}}_n$ is of the form
    \begin{equation}
        P_n = z^n + p_{n,n-1}z^{n-1} + \cdots + p_{n,0}
    \end{equation}
    where $p_{n,j}$ is positive for all $n$ and $j$.
\end{proposition}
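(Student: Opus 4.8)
The plan is to argue by strong induction on $n$, using the coefficient recurrence of Proposition~\ref{prop:charpolyrec2_UHT} specialized to the case $t_k = -1$ for all $k$. The key observation is that setting $t_k = -1$ converts every $-t_k$ into $+1$, so the recurrence becomes purely additive; this is exactly equation~\eqref{eq:polyrec_neg_t} with $\overline{t}_k = 1$, namely
\begin{subequations}
\begin{align}
    p_{n,n} &= 1,\\
    p_{n,j} &= p_{n-1,j-1} + \sum_{k=1}^{n-j} p_{n-k,j} \quad\text{for}\quad 1 \le j \le n-1,\\
    p_{n,0} &= \sum_{k=1}^n p_{n-k,0}, \,\text{and}\\
    p_{0,0} &= 1 \>.
\end{align}
\end{subequations}
Since every term on the right-hand side now carries a $+$ sign, positivity should propagate with no possibility of cancellation, which is the entire content of the statement.

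Concretely, I would take as the inductive hypothesis that $p_{m,j} > 0$ for every $m < n$ and every valid degree $0 \le j \le m$. The base cases are $p_{0,0} = 1 > 0$ together with the leading coefficient $p_{n,n} = 1 > 0$, which holds directly for all $n$. For the inductive step with $1 \le j \le n-1$, I would note that $p_{n,j}$ depends only on $p_{n-1,j-1}$ (with first index $n-1 < n$) and on the terms $p_{n-k,j}$ for $1 \le k \le n-j$ (each with first index $n-k < n$), all of which are positive by hypothesis; hence their sum is positive. The case $j = 0$ is handled identically using $p_{n,0} = \sum_{k=1}^{n} p_{n-k,0}$, a sum over strictly smaller first indices. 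A useful structural remark is that, within a fixed $n$, no coefficient $p_{n,j}$ depends on any other coefficient of $P_n$, so plain strong induction on $n$ alone suffices and no secondary induction on $j$ is needed.

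I do not expect a genuine obstacle here; the only thing requiring care is the index bookkeeping, specifically verifying that every coefficient appearing on the right-hand side is both well defined and covered by the inductive hypothesis. For the sum in the middle equation I would check that $k$ ranging over $1 \le k \le n-j$ forces $j \le n-k$, so each $p_{n-k,j}$ is a legitimate coefficient of $P_{n-k}$, and that $p_{n-1,j-1}$ is valid precisely because $j \ge 1$. Once these ranges are confirmed, the positivity follows immediately from the sign-free recurrence, so the write-up should be short.
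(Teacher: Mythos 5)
Your proof is correct and follows essentially the same route as the paper: specialize the coefficient recurrence of Proposition~\ref{prop:charpolyrec2_UHT} to $t_k = -1$, observe that all signs become positive (the paper's equation~\eqref{eqn:max_height_char_poly_rec}), and conclude positivity by induction from $p_{0,0} = 1$ and $p_{n,n} = 1$. The only difference is that you make the strong induction on $n$ and the index bookkeeping explicit, which the paper leaves implicit.
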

\begin{proof}
    When $t_k = -1$ for $1 \le k \le n$, Proposition~\ref{prop:charpolyrec2_UHT} reduces to
    \begin{subequations}
    \label{eqn:max_height_char_poly_rec}
    \begin{align}
        p_{n,n} &= 1,\\
        p_{n,j} &= p_{n-1,j-1} + \sum_{k=1}^{n-j} p_{n-k,j}\quad\text{for}\quad 1 \le j \le n-1, \label{eqn:max_height_char_poly_rec_b}\\
        p_{n,0} &= \sum_{k=1}^n p_{n-k,0},
 \,\text{and} \label{eqn:max_height_char_poly_rec_c}\\
        p_{0,0} &= 1 \label{eqn:max_height_char_poly_rec_d}\>.
    \end{align}
    \end{subequations}
    Since $p_{0,0}$ is positive, and all coefficients in the above equations are positive, $p_{n,j}$ must be positive for all $n$ and $j$.
\end{proof}

\begin{proposition}
    \label{prop:genfun}
    The generating function of the sequence $(p_{i,i}, p_{i+1,i}, \ldots)$ for all $i \ge 0$ is
    \begin{equation}
        G_{i}(x) = \bigg (\frac{1-x}{1-2x} \bigg)^{i+1} \>.
    \end{equation}
\end{proposition}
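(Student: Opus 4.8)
The plan is to prove by induction on $i$ that the one-step ratio $G_i(x)/G_{i-1}(x)$ equals the fixed factor $(1-x)/(1-2x)$, so that the closed form $\bigl((1-x)/(1-2x)\bigr)^{i+1}$ follows immediately once the base case $i=0$ is settled. The engine of the whole argument is the coefficient recurrence~\eqref{eqn:max_height_char_poly_rec_b}, which I would first rewrite by changing the summation variable: setting $m = n-k$ turns $\sum_{k=1}^{n-i} p_{n-k,i}$ into the running sum $\sum_{m=i}^{n-1} p_{m,i}$, so that $p_{n,i} = p_{n-1,i-1} + \sum_{m=i}^{n-1} p_{m,i}$ for $n > i$, with the boundary value $p_{i,i}=1$.

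Next I would translate this into a functional equation for $G_i(x) = \sum_{n\ge i} p_{n,i} x^{n-i}$. Multiplying the recurrence by $x^{n-i}$ and summing over $n > i$ produces three pieces. The left-hand side gives $G_i(x) - 1$, the $-1$ accounting for the omitted $n=i$ term. The term $p_{n-1,i-1}$, after a shift of index, contributes exactly $G_{i-1}(x) - 1$; this is the step that couples level $i$ to level $i-1$ and makes the induction work. The running-sum term is the one to handle with care: writing $b_\ell = p_{i+\ell,i}$ and recognizing that multiplication of a generating function by $1/(1-x)$ forms partial sums, this piece evaluates to $xG_i(x)/(1-x)$. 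Collecting everything yields $G_i(x) - 1 = (G_{i-1}(x)-1) + xG_i(x)/(1-x)$, and since $1 - x/(1-x) = (1-2x)/(1-x)$, solving gives the clean first-order relation $G_i(x) = \frac{1-x}{1-2x}\,G_{i-1}(x)$.

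For the base case I would apply the same manipulation to~\eqref{eqn:max_height_char_poly_rec_c}: here the diagonal term $p_{n-1,i-1}$ is absent, so the functional equation reads $G_0(x) - 1 = xG_0(x)/(1-x)$, which rearranges to $G_0(x) = (1-x)/(1-2x)$. Equivalently one can just cite $p_{0,0}=1$, $p_{n,0}=2^{n-1}$ and sum the resulting geometric series. Chaining the first-order relation from $G_0$ up to $G_i$ then gives the claimed formula.

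I expect the only real obstacle to be the bookkeeping in the running-sum term: one must keep the lower limit $m=i$, the upper limit $n-1$, and the external power $x^{n-i}$ consistent through the change of variables, and in particular verify that the extra factor of $x$, arising because the partial sums are indexed by $n-1$ rather than $n$, comes out correctly, since an error there changes the crucial factor $x/(1-x)$ and hence the whole recurrence. Everything else is routine generating-function algebra together with a one-line induction.
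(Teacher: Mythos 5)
Your proposal is correct and takes essentially the same route as the paper: induction on $i$, establishing the one-step ratio $G_i(x) = \frac{1-x}{1-2x}\,G_{i-1}(x)$ from the coefficient recurrence, with the $i=0$ case settled separately from equation~\eqref{eqn:max_height_char_poly_rec_c}. The only difference is mechanical: the paper multiplies the series by $(1-2x)$ so that the differences $p_{n,j} - 2p_{n-1,j}$ telescope the running sums away, whereas you absorb the running sum directly via the partial-sum operator $x/(1-x)$ to obtain the functional equation $G_i(x) - 1 = \bigl(G_{i-1}(x) - 1\bigr) + xG_i(x)/(1-x)$; these are dual manipulations of the same recurrence, and your bookkeeping (the limits $m=i$ to $n-1$, the extra factor of $x$, and the boundary term $p_{i,i}=1$) all checks out.
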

\begin{proof}
    First we will prove the $i=0$ case. Let 
    \begin{equation}
        G_0(x) = \sum_{\ell=0}^{\infty} p_{\ell,0} x^\ell \>.
    \end{equation}
    Then,
    \begin{equation}
        (1-2x) G_0(x) = p_{0,0} + \sum_{\ell=1}^{\infty} (p_{\ell,0} - 2p_{\ell-1,0}) x^\ell \>.
    \end{equation}
    From equation~\eqref{eqn:max_height_char_poly_rec_c},
    \begin{align}
        (1-2x) G_0(x) &= p_{0,0} + (p_{1,0} - 2p_{0,0}) x + \sum_{\ell=2}^{\infty} (p_{\ell,0} - 2p_{\ell-1,0})x^\ell\\
        &= p_{0,0} + (p_{1,0} - 2p_{0,0}) x + \sum_{\ell=2}^{\infty} \bigg ( \sum_{k=1}^{\ell} p_{\ell-k,0} - 2 \sum_{k=1}^{\ell-1} p_{\ell-1-k,0} \bigg ) x^{\ell} \\
        &= p_{0,0} + (p_{1,0} - 2p_{0,0}) x + \sum_{\ell=2}^{\infty} \bigg ( \sum_{k=1}^{\ell} p_{\ell-k,0} - 2 \sum_{k=2}^{\ell} p_{\ell-k,0} \bigg ) x^{\ell}\\
        &= p_{0,0} + (p_{1,0} - 2p_{0,0}) x + \sum_{\ell=2}^{\infty} \bigg ( p_{\ell-1,0} - \sum_{k=2}^{\ell} p_{\ell-k,0} \bigg ) x^{\ell} \>.
    \end{align}
    Since $p_{0,0} = p_{1,0} = 1$,
    \begin{align}
        (1-2x) G_0(x) &= 1 - x + \sum_{\ell = 2}^{\infty} \bigg( p_{\ell-1,0} - \sum_{k=1} ^{\ell-1} p_{\ell-1-k,0} \bigg)x^{\ell} \\
        &= 1-x \>.
    \end{align}
    Therefore
    \begin{equation}
        G_0(x) = \frac{1-x}{1-2x} \>.
    \end{equation}
    
    Next we prove the general case for $i > 0$. Assume inductively that
    \begin{equation}
        G_i(x) = \bigg ( \frac{1-x}{1-2x} \bigg )^{i+1} = \sum_{\ell=0}^{\infty} p_{i+\ell,i} x^\ell \>.
    \end{equation}
    
    \begin{align*}
        \sum_{\ell = 0}^{\infty} p_{i+\ell+1,i+1}x^{\ell} &= \bigg( \frac{1-2x}{1-2x}\bigg )\sum_{\ell = 0}^{\infty} p_{i+\ell+1,i+1}x^{\ell}\\
        &= \bigg ( \frac{1}{1-2x} \bigg ) \bigg [ \sum_{\ell=0}^{\infty} p_{i+\ell+1,i+1} x^{\ell} - 2x \sum_{\ell = 0}^{\infty} p_{i+\ell+1,i+1}x^{\ell} \bigg ]\\
        &= \bigg ( \frac{1}{1-2x} \bigg ) \bigg [ p_{i+1,i+1} + \sum_{\ell=1}^{\infty} (p_{i+\ell+1,i+1} - 2p_{i+\ell,i+1}) x^{\ell} \bigg ] \shortintertext{Because $p_{i+1,i+1} = 1 = p_{i,i}$}\\
        &= \bigg ( \frac{1}{1-2x} \bigg ) \bigg [ p_{i,i} + \sum_{\ell=1}^{\infty} (p_{i+\ell+1,i+1} - 2p_{i+\ell,i+1}) x^{\ell} \bigg ]\\
        &= \bigg ( \frac{1}{1-2x} \bigg ) \bigg [ p_{i,i} + \sum_{\ell=1}^{\infty} \bigg ( p_{i+\ell+1,i+1} - p_{i+\ell,i+1} - p_{i+\ell,i+1} \bigg ) x^{\ell} \bigg ]\\
        &= \bigg ( \frac{1}{1-2x} \bigg ) \bigg [ p_{i,i} + \sum_{\ell=1}^{\infty} \bigg ( p_{i+\ell+1,i+1} - p_{i+\ell,i+1} - \sum_{k=0}^{\ell-1} p_{i+\ell-k,i+1} + \sum_{k=1}^{\ell-1} p_{i+\ell-k,i+1} \bigg ) x^{\ell} \bigg ]\\
        &= \bigg ( \frac{1}{1-2x} \bigg ) \bigg [ p_{i,i} + \sum_{\ell=1}^{\infty} \bigg ( p_{i+\ell+1,i+1} - p_{i+\ell,i+1} - \sum_{k=0}^{\ell-1} p_{i+\ell-k,i+1} + \sum_{k=1}^{\ell-1} p_{i+\ell-k,i+1} \bigg ) x^{\ell} \bigg ]\\
        &= \bigg ( \frac{1}{1-2x} \bigg ) \Bigg [ p_{i,i} + \sum_{\ell=1}^{\infty} \Bigg ( \bigg ( p_{i+\ell+1,i+1} - \sum_{k=1}^{\ell} p_{i+\ell+1-k,i+1} \bigg ) - \bigg ( p_{i+\ell,i+1} - \sum_{k=1}^{\ell-1} p_{i+\ell-k,i+1} \bigg ) \Bigg) x^{\ell} \Bigg ] \>.
    \end{align*}
    Rewriting equation~\eqref{eqn:max_height_char_poly_rec_b} as
    \begin{equation}
        p_{n,j} = p_{n+1,j+1} - \sum_{k=1}^{n-j} p_{n+1-k,j+1},
    \end{equation}
    we find
    \begin{align*}
        \sum_{\ell = 0}^{\infty} p_{i+\ell+1,i+1}x^{\ell} &= \bigg ( \frac{1}{1-2x} \bigg ) \bigg [ p_{i,i} + \sum_{\ell=1}^{\infty} (p_{i+\ell,i} - p_{i+\ell-1,i}) x^{\ell} \bigg ]\\
        &= \bigg ( \frac{1}{1-2x} \bigg ) \bigg [ \sum_{\ell=0}^{\infty} p_{i+\ell,i} x^{\ell} - \sum_{\ell=1}^{\infty} p_{i+\ell-1,i} x^{\ell} \bigg ]\\
        &= \bigg ( \frac{1}{1-2x} \bigg ) \bigg [ \sum_{\ell=0}^{\infty} p_{i+\ell,i}x^{\ell} - \sum_{\ell=0}^{\infty} p_{i+\ell,i} x^{\ell+1} \bigg ]\\
        &= \bigg ( \frac{1-x}{1-2x} \bigg ) \sum_{\ell=0}^{\infty} p_{i+\ell,i} x^{\ell}\\
        &= \bigg ( \frac{1-x}{1-2x} \bigg )^{i+2}
    \end{align*}
    \end{proof}

\begin{proposition}
    The coefficients $p_{n,k}$ are given by the OEIS sequence \href{http://oeis.org/A105306}{A105306} for the ``number of directed column-convex polynomials of area $n$, having the top of the right-most column at height $k$.'' We have $p_{n,k} = T_{n+1,k+1}$ where
    \begin{equation}
        T_{n,k} = 
        \begin{cases}
            \displaystyle\sum_{j=0}^{n-k-1} \binom{k+j}{k-1} \binom{n-k-1}{j} & \text{if } k < n\\
            \hfil 1 & \text{if } k = n
        \end{cases}
    \end{equation}
    Maple ``simplifies'' this to
    \begin{equation}
      T_{n,k} =
      \begin{cases}
          kF \!
          \left(
              \begin{array}{c|c}
                  k+1, k+1-n & \multirow{2}{*}{$-1$} \\
                  2 &
              \end{array}
          \right) & \text{if } n \ne k\\
          \hfil 1 & \text{if } n = k
        \end{cases}
    \end{equation}
    where $F(\cdot)$ is the hypergeometric function defined as
    \begin{equation}
        F \!
          \left(
              \begin{array}{c|c}
                  a, b & \multirow{2}{*}{$z$} \\
                  c &
              \end{array}
          \right)
          =
          \sum_{n=0}^{\infty} \frac{a^{\bar{n}} b^{\bar{n}}}{c^{\bar{n}}} \frac{z^n}{n!}
    \end{equation}
    where $q^{\bar{n}}$ is $q\cdot(q+1)\cdots(q + n - 1)$.
\end{proposition}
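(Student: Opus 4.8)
The plan is to prove the identity at the level of generating functions, using Proposition~\ref{prop:genfun} as the bridge. First I would record what that proposition gives in coefficient form: writing $n = k + \ell$ with $\ell \ge 0$, we have
\begin{equation}
    p_{n,k} = p_{k+\ell,k} = [x^\ell]\,G_k(x) = [x^\ell]\left(\frac{1-x}{1-2x}\right)^{k+1}.
\end{equation}
On the other side, substituting $N=n+1$, $K=k+1$ into the claimed formula (the case $k=n$, i.e. $\ell=0$, just gives $T_{n+1,n+1}=1=p_{n,n}$) yields, for $\ell \ge 1$,
\begin{equation}
    T_{n+1,k+1} = \sum_{j=0}^{\ell-1}\binom{k+1+j}{k}\binom{\ell-1}{j} =: U_k(\ell),
\end{equation}
with $U_k(0)=1$. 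Thus the whole statement reduces to showing $\sum_{\ell\ge 0}U_k(\ell)\,x^\ell = G_k(x)$ for each fixed $k$, after which matching coefficients of $x^\ell$ finishes the proof.

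To evaluate that generating function, I would substitute the double sum, re-index by $s=\ell-1$, and interchange the order of summation so that $j$ is summed outermost. Using the standard identity $\sum_{s\ge j}\binom{s}{j}x^s = x^j/(1-x)^{j+1}$, the inner sum collapses, leaving
\begin{equation}
    \sum_{\ell\ge 1}U_k(\ell)\,x^\ell = \frac{x}{1-x}\sum_{j\ge 0}\binom{k+1+j}{k}\,u^j, \qquad u := \frac{x}{1-x}.
\end{equation}
The remaining sum I would handle by re-indexing $i=j+1$ and invoking the negative-binomial series $\sum_{i\ge 0}\binom{k+i}{i}u^i = (1-u)^{-(k+1)}$, which gives $\sum_{j\ge 0}\binom{k+1+j}{k}u^j = u^{-1}\big((1-u)^{-(k+1)}-1\big)$.

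At this point the computation telescopes cleanly: the prefactor $\tfrac{x}{1-x}\cdot u^{-1}$ equals $1$, and since $1-u = (1-2x)/(1-x)$ one gets $\sum_{\ell\ge 1}U_k(\ell)x^\ell = \big(\tfrac{1-x}{1-2x}\big)^{k+1}-1$; adding back the $\ell=0$ term $U_k(0)=1$ recovers exactly $G_k(x)$. (As a sanity check, at $k=0$ this reproduces $p_{\ell,0}=2^{\ell-1}$, matching the earlier computation.) I do not expect the algebra to be the obstacle — it is routine once the summation order is swapped — so the one genuinely fiddly point is the careful bookkeeping of the boundary term $\ell=0$, where the two-case definition of $T$ and the lower summation limit must be reconciled. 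The final clause, that Maple's hypergeometric form agrees with the explicit sum, I would treat as a separate and purely formal verification: expanding the ${}_2F_1$ via the definition in the statement and writing out the rising factorials $(k+1)^{\bar m}$, $(k+1-n)^{\bar m}$, $2^{\bar m}$ as binomial coefficients reduces it term-by-term to the sum formula, with the truncation of the series coming from the factor $(k+1-n)^{\bar m}$ vanishing once $m$ exceeds $n-k-1$.
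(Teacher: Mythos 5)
Your proposal is correct, and it reaches the result by a genuinely different route than the paper. The paper also starts from Proposition~\ref{prop:genfun}, but then extracts coefficients analytically: it writes $p_{i+n,i} = \frac{1}{n!}\,G_i^{(n)}(0)$, decomposes $G_i = f_i \circ g$ with $f_i(x) = x^{i+1}$ and $g(x) = \frac{1-x}{1-2x}$, applies Fa\`a di Bruno's formula, and evaluates the resulting partial Bell polynomial $B_{n,k}\bigl(1,4,24,\ldots,(n-k+1)!\,2^{n-k}\bigr) = \binom{n-1}{k-1}\frac{n!}{k!}2^{n-k}$ by citing a theorem on binomial-type sequences from the literature; it then needs a further chain of binomial manipulations (expanding $2^{n-k-1}$ as a row sum of binomial coefficients, swapping summation order, and a Vandermonde-type convolution) to land on $\sum_{j=0}^{n-1}\binom{n-1}{j}\binom{i+j+1}{i}$. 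You instead run the argument in the reverse direction: you take the claimed closed form, compute its generating function directly by interchanging summation and invoking only the two standard series $\sum_{s\ge j}\binom{s}{j}x^s = x^j/(1-x)^{j+1}$ and $\sum_{i\ge 0}\binom{k+i}{i}u^i = (1-u)^{-(k+1)}$, and check that it collapses to $\bigl(\frac{1-x}{1-2x}\bigr)^{k+1}$; since the statement supplies the formula to be proved, this verification is fully legitimate, and your index bookkeeping (the $\ell=0$ boundary term, the reindexing $i=j+1$, the cancellation $\frac{x}{1-x}u^{-1}=1$) checks out. What each approach buys: the paper's derivation constructs the formula without knowing it in advance, at the cost of heavier machinery (Fa\`a di Bruno, Bell polynomials, an external lemma); yours is shorter, self-contained, and elementary. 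You also go one step beyond the paper on the hypergeometric clause, which the paper leaves as an unverified Maple remark: your observation that $(k+1-n)^{\bar m}(-1)^m/m! = \binom{n-k-1}{m}$ and $k\,(k+1)^{\bar m}/2^{\bar m} = \binom{k+m}{k-1}$ makes that series term-by-term identical to the explicit sum, with truncation exactly where you say.
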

\begin{proof}
    We will show that
    \begin{equation}
        p_{i+n,i} = T_{n+i+1,i+1} =
        \begin{cases}
            \displaystyle \sum_{j=0}^{n-1} {i+j+1 \choose i} {n-1 \choose j} & \text{if } n > 0\\
            \hfil 1 & \text{if } n = 0 \>.
        \end{cases}
    \end{equation}

    By Proposition~\ref{prop:genfun}
    \begin{equation}
        p_{i+n,i} = \frac{1}{n!} \frac{\mathrm{d}^n}{\mathrm{d}x^n} G_i(x) \Bigr|_{x=0}
    \end{equation}
    where
    \begin{equation}
        G_i(x) = \bigg ( \frac{1-x}{1-2x} \bigg )^{i+1} = f_i(g(x))
    \end{equation}
    with
    \begin{align}
        f_i(x) &= x^{i+1}, \text{ and}\\
        g(x) &= \frac{1-x}{1-2x} = \frac{1}{1-2x} - \frac{x}{1-2x}\>.
    \end{align}
    Differentiating $f_i(x)$ and $g(x)$ with respect to $x$,
    \begin{align}
        \frac{\mathrm{d}^n}{\mathrm{d}x^n} f_i(x) &=
        \begin{cases}
            (i+1)(i)\cdots(i-n+2) x^{i+1-n} &\text{for}\quad n \le i+1\\
            \hfil 0 &\text{for}\quad n > i+1
        \end{cases}\\
        &= {i+1 \choose n} n! x^{i+1-n}
    \end{align}
    and
    \begin{align}
        \frac{\mathrm{d}^n}{\mathrm{d}x^n} g(x) &= \frac{\mathrm{d}^n}{\mathrm{d}x^n} \frac{1}{1-2x} + \frac{\mathrm{d}^n}{\mathrm{d}x^n} \frac{x}{1-2x}\\
        &= \frac{2^n n!}{(1-2x)^{n+1}} + \frac{2^{n-1} n!}{(1-2x)^n} + \frac{2^n n! \,x}{(1-2x)^{n+1}}\\
        &= \frac{2^n n! (1-x)}{(1-2x)^{n+1}} - \frac{2^{n-1}n!}{(1-2x)^n}
    \end{align}
    with
    \begin{equation}
    \frac{\mathrm{d}^n}{\mathrm{d}x^n} g(x) \Bigr|_{x=0} =
    \begin{cases}
        n!\,2^{n-1} &\text{for}\quad n > 0\\
        1 &\text{for}\quad n = 0 \>.
    \end{cases}
    \end{equation}
    
    When $n=0$,
    \begin{equation}
        p_{i+n,i} = p_{i,i} = G_i(0) = 1 \>.
    \end{equation}
    For $n > 0$, by Fa\`a di Bruno's formula we have
    \begin{align}
        \frac{\mathrm{d}^n}{\mathrm{d}x^n} G_i(x) &= \frac{\mathrm{d}^n}{\mathrm{d}x^n} f_i(g(x)) \\
        &= \sum_{k=1}^n f_i^{(k)}\big(g(x)\big) B_{n,k} (g'(x), g''(x), \ldots, g^{(n-k+1)}(x))
    \end{align}
    and therefore
    \begin{align}
        \frac{\mathrm{d}^n}{\mathrm{d}x^n} G_i(x) \Bigr|_{x=0} &= \sum_{k=1}^n f_i^{(k)} \big ( g(0) \big) B_{n,k} (g'(0), g''(0), \ldots, g^{(n-k+1)}(0))\\
        &= \sum_{k=1}^n f_i^{(k)}(1) B_{n,k}(1, 4, 24, \ldots, (n-k+1)! 2^{n-k}) \>.
    \end{align}
    
    By Theorem 6 of~\cite{abbas2005new},
    \begin{align}
         B_{n,k}(1, 4, 24, \ldots, (n-k+1)! 2^{n-k}) &= B_{n,k}(q_0(1), q_1(2), \ldots, q_{n-k}(n-k+1)) \\
         &= {n-1 \choose k-1} \frac{n!}{k!} 2^{n-k}
    \end{align}
    because the function
    \begin{equation}
        q_n(x) = \frac{x!}{(x-n)!} 2^n
    \end{equation}
    satisfies
    \begin{equation}
        q_n(x+y) = \sum_{k=0}^n {n \choose k} q_k(y) q_{n-k}(x) \>.
    \end{equation}
    
    Returning to the proof,
    \begin{align}
        p_{i+n,i} &= \frac{1}{n!} \frac{\mathrm{d}^n}{\mathrm{d}x^n} G_i(x)\Bigr|_{x=0} \\
        &= \frac{1}{n!} \sum_{k=1}^n {i+1 \choose k} {n-1 \choose k-1} k! \frac{n!}{k!} 2^{n-k} \\
        &= \sum_{k=1}^n {i+1 \choose k} {n-1 \choose k-1} 2^{n-k} \\
        &= \sum_{k=0}^{n-1} {i+1 \choose k+1} {n-1 \choose k} 2^{n-k-1} \\
        &= \sum_{k=0}^{n-1} {i+1 \choose k+1} {n-1 \choose k} \sum_{j = 0}^{n-k-1} {n-k-1 \choose j}\\
        &= \sum_{k=0}^{n-1} \sum_{j=0}^{n-k-1} {n-1 \choose k} {i+1 \choose k+1} {n-k-1 \choose j}\\
        &= \sum_{j = 0}^{n-1} \sum_{k = 0}^{n - j - 1} {n-1 \choose k} {i+1 \choose k+1} {n-k-1 \choose j}\\
        &= \sum_{j = 0}^{n-1} \sum_{k=0}^{j} {n-1 \choose k} {i+1 \choose k+1} {n-k-1 \choose n-j - 1}\\
        &= \sum_{j = 0}^{n-1} \sum_{k=0}^{j} {n-1 \choose n-j - 1} {j \choose k} {i+1 \choose k+1}\\
        &= \sum_{j = 0}^{n-1} {n-1 \choose j} \sum_{k=0}^{j} {j \choose k} {i+1 \choose k+1} \\
        &= \sum_{j = 0}^{n-1} {n-1 \choose j} {i+j+1 \choose j + 1} \\
        &= \sum_{j = 0}^{n-1} {n-1 \choose j} {i+j+1 \choose i}
    \end{align}
\end{proof}

% ------------------------- %
% A 2D recurrence           %
% ------------------------- %
\begin{proposition}
    The characteristic polynomial of $\widetilde{\mathbf{M}}_n$ is
    \begin{equation*}
    P_n(z)
    = \sum_{\ell = 0}^{\lfloor \sfrac{n}{2} \rfloor}
    {n \choose 2\ell}
    \bigg(\frac{z}{2} + 1\bigg)^{n - 2\ell}\bigg(1 + \frac{z^2}{4}\bigg)^{\ell} + \dfrac{z}{2}\sum_{\ell = 0}^{\lfloor \frac{n-1}{2}\rfloor}
    {n \choose 2\ell + 1}
    \bigg(\frac{z}{2} + 1\bigg)^{n - 2\ell - 1}\bigg(1 + \frac{z^2}{4}\bigg)^{\ell} \>.
    \end{equation*}
\end{proposition}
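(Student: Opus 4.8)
The plan is to reduce the convolution-type recurrence of Proposition~\ref{prop:charpolyrec1_UHT} to a constant-coefficient three-term recurrence, solve that explicitly, and then recognize the two sums in the statement as the symmetric and antisymmetric power sums of the two characteristic roots. Setting $t_k = -1$ in~\eqref{eqn:thm1_UHT} gives $P_n = z P_{n-1} + \sum_{k=1}^{n} P_{n-k} = z P_{n-1} + \sum_{j=0}^{n-1} P_j$. Writing $T_n = \sum_{j=0}^{n-1} P_j$, so that $P_n = z P_{n-1} + T_n$ and $T_n - T_{n-1} = P_{n-1}$, I would subtract consecutive instances of the first identity and use the second to telescope away the running sum, obtaining for $n \ge 2$ the recurrence
\begin{equation}
    P_n(z) = (z+2) P_{n-1}(z) - z\, P_{n-2}(z),
\end{equation}
with $P_0 = 1$ and $P_1 = z + 1$. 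This elimination of the convolution is the one genuinely non-mechanical step; everything afterward is bookkeeping.

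Next I would introduce $\alpha = \tfrac{z}{2} + 1$ and $\beta = \sqrt{1 + \tfrac{z^2}{4}}$, so the characteristic roots of the recurrence are $\lambda_{\pm} = \alpha \pm \beta$, with $\lambda_+ + \lambda_- = z+2$ and $\lambda_+\lambda_- = \alpha^2 - \beta^2 = z$, matching the recurrence coefficients. The key observation is that the two sums in the statement are precisely the even- and odd-index binomial slices of $(\alpha \pm \beta)^n$: collecting even powers of $\beta$ yields $E_n := \sum_{\ell} \binom{n}{2\ell}\alpha^{n-2\ell}\beta^{2\ell} = \tfrac12(\lambda_+^n + \lambda_-^n)$, while the odd powers yield $O_n := \sum_{\ell}\binom{n}{2\ell+1}\alpha^{n-2\ell-1}\beta^{2\ell} = \tfrac{1}{2\beta}(\lambda_+^n - \lambda_-^n)$. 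Thus the claimed right-hand side is exactly $E_n + \tfrac{z}{2} O_n$. Since $\beta$ enters $E_n$ and $O_n$ only through $\beta^2 = 1 + z^2/4$, both are genuine polynomials in $z$, which disposes of the only apparent concern about the irrational quantity $\beta$.

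Finally I would verify that $E_n + \tfrac{z}{2}O_n$ satisfies the same data as $P_n$. Because $\lambda_{\pm}$ are the roots of $\lambda^2 - (z+2)\lambda + z$, both $E_n$ and $O_n$, and hence any fixed linear combination of them, obey $X_n = (z+2)X_{n-1} - z X_{n-2}$; I would then check the base cases $E_0 + \tfrac{z}{2}O_0 = 1 = P_0$ and $E_1 + \tfrac{z}{2}O_1 = \alpha + \tfrac{z}{2} = z+1 = P_1$, so the two sequences agree for all $n$ by induction. An equivalent route, if one prefers to avoid $\beta$ entirely, is to prove the closed form directly by induction on the three-term recurrence, using Pascal's rule to reconcile the binomial coefficients in the two sums; this keeps every expression polynomial at each stage but demands slightly more careful index handling. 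Either way, the substantive content is the telescoping reduction of the first step, after which the result is a routine linear-recurrence solution presented as a binomial identity.
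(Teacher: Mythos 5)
Correct, and essentially the paper's own approach: the paper likewise introduces the running sum $T_n$, recasts the recurrence as powers of the $2\times 2$ matrix $\bigl[\begin{smallmatrix} z & 1\\ z & 2 \end{smallmatrix}\bigr]$ (whose characteristic polynomial is exactly your $\lambda^2-(z+2)\lambda+z$), diagonalizes to obtain $P_n=\lambda_+^n\bigl(\tfrac12+\tfrac{z}{4\Delta}\bigr)+\lambda_-^n\bigl(\tfrac12-\tfrac{z}{4\Delta}\bigr)$ with $\Delta=\sqrt{1+z^2/4}$, and finishes with the same even/odd binomial split. Your only departures are presentational: you telescope to the scalar three-term recurrence $P_n=(z+2)P_{n-1}-zP_{n-2}$ and verify the closed form by induction on two base cases rather than deriving it via eigenvectors, and you handle the degenerate points $z=\pm 2i$ by noting everything is polynomial in $\beta^2$ where the paper invokes continuity.
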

This proposition can be proved in several ways. We choose below to think of $z \in \mathbb{C} \setminus \{\pm 2i\}$, for a reason that will become clear. Since the end result is a polynomial in $z$, proving the formula for $z\neq \pm 2i$ will recover the exceptional cases by continuity.

Another equally valid approach would be to think of $z$ as being transcendental and noting that the characteristic polynomial of $\widetilde{\mathbf{M}}_{n}$ has integer coefficients.
\begin{proof}
From Proposition~\ref{prop:charpolyrec1_UHT} 
\begin{align}
    P_n(z) &= zP_{n-1}(z) - \sum_{k=1}^n t_k P_{n-k}(z)\\
    &= zP_{n-1}(z) - \sum_{k=0}^{n-1} t_{n-k} P_k(z) \>.
\end{align}

If $t_k = -1$ for $1 \le k \le n$,
\begin{equation}
    P_n(z) = zP_{n-1}(z) + \sum_{k=0}^{n-1} P_k(z) \>.
\end{equation}
Let $T_j(z) = \sum_{k=0}^{j} P_k(z)$. $T_n(z) = T_{n-1}(z) + P_{n}(z)$, so
\begin{align}
    P_n(z) &= zP_{n-1}(z) + T_{n-1}(z) \\
    T_n(z) &= zP_{n-1}(z) + 2T_{n-1}(z)
\end{align}
or
\begin{align}
    \left[
        \begin{array}{c}
            P_n(z) \\
            T_n(z)
        \end{array}
    \right]
    &=
    \left[
        \begin{array}{cc}
            z & 1 \\
            z & 2
        \end{array}
    \right]^n
    \left[
        \begin{array}{c}
            P_0(z) \\
            T_0(z)
        \end{array}
    \right] \\
    &= 
    \left[
        \begin{array}{cc}
            z & 1 \\
            z & 2
        \end{array}
    \right]^n 
    \left[
        \begin{array}{c}
            1 \\
            1
        \end{array}
    \right]
\end{align}
since $P_0(z) = 1$ and $T_0(z) = \sum_{j = 0}^0 P_0(z) = 1$. The eigenvalues of this matrix are
\begin{align}
    \lambda_{+} &= 1 + \frac{z}{2} + \Delta \\
    \lambda_{-} &= 1 + \frac{z}{2} - \Delta \\
    \Delta &= \sqrt{1 + \sfrac{z^2}{4}} \>.
\end{align}
If $z = \pm2i$ the eigenvalues are multiple and our approach would have to be modified. We ignore this and recover the true result at the end. The eigenvectors are
\begin{equation}
    \mathbf{V} =
    \left[
        \begin{array}{cc}
            1 & 1 \\
            1 - \frac{z}{2} + \Delta & 1 - \frac{z}{2} - \Delta
        \end{array}
    \right]
\end{equation}
and
\begin{equation}
    \mathbf{V}^{-1} = \dfrac{-1}{2\Delta}
    \left[
        \begin{array}{cc}
             1 - \frac{z}{2} - \Delta & -1 \\
            -1 + \frac{z}{2} - \Delta & 1
        \end{array}
    \right]
\end{equation}
hence
\begin{equation}
    \mathbf{V}^{-1}
    \left[
        \begin{array}{c}
            1 \\
            1
        \end{array}
    \right]
    =
    \dfrac{-1}{2\Delta}
    \left[
        \begin{array}{c}
            \frac{-z}{2} - \Delta \\
            \frac{z}{2} - \Delta
        \end{array}
    \right]
    =
    \left[
        \begin{array}{c}
            \frac{1}{2} + \frac{z}{4\Delta} \\
            \frac{1}{2} - \frac{z}{4\Delta}
        \end{array}
    \right] \>.
\end{equation}
Therefore
\begin{equation}
    \left[
        \begin{array}{c}
            P_n(z) \\
            T_n(z)
        \end{array}
    \right]
    =
    \left[
        \begin{array}{cc}
            1 & 1 \\
            1 - \frac{z}{2} + \Delta & 1 - \frac{z}{2} - \Delta
        \end{array}
    \right]
    \left[
        \begin{array}{c}
            \lambda_{+}^{n}\left(\frac{1}{2} + \frac{z}{4\Delta}\right) \\
            \lambda_{-}^{n}\left(\frac{1}{2} - \frac{z}{4\Delta}\right)
        \end{array}
    \right]
\end{equation}
and in particular
\begin{equation}
    P_n(z) = \lambda_{+}^{n}\bigg(\frac{1}{2} + \frac{z}{4\Delta}\bigg) + \lambda_{-}^{n}\bigg(\frac{1}{2} - \frac{z}{4\Delta}\bigg) \>.
\end{equation}
Now
\begin{align}
    \lambda_{+}^n &= \bigg(\frac{z}{2} + 1 + \Delta \bigg)^n\\
    &= \sum_{k=0}^n {n \choose k} \bigg( \frac{z}{2} + 1 \bigg) \Delta^k
\end{align}
and
\begin{align}
    \lambda_{-}^n &= \bigg(\frac{z}{2} + 1 - \Delta \bigg)^n\\
    &= \sum_{k=0}^n {n \choose k} \bigg( \frac{z}{2} + 1 \bigg) \left(-\Delta\right)^k \>.
\end{align}
\begin{multline}
    \therefore P_n(z) = \sum_{k=0}^n {n \choose k} \bigg ( \frac{z}{2} + 1 \bigg )^{n-k} \bigg ( \frac{1}{2} \Delta^k + \frac{1}{2} (-\Delta)^k \bigg )\\+
    \frac{z}{4\Delta} \sum_{k=0}^n {n \choose k} \bigg ( \frac{z}{2}+1 \bigg)^{n-k} \left ( \Delta^k - (-\Delta)^k \right ) \>.
\end{multline}
Every odd term drops out of the first, and every even out of the second.
\begin{align*}
    \therefore P_n(z) &= \sum_{\substack{k = 0 \\ k \text{ even}}}^{n} {n \choose k} \bigg ( \frac{z}{2} + 1 \bigg )^{n-k} \Delta^k + \dfrac{z}{4\Delta}\sum_{\substack{k = 0 \\ k \text{ odd}}}^{n} {n \choose k} \bigg(\frac{z}{2} + 1\bigg)^k \cdot2\Delta^k \\
    &= \sum_{\ell = 0}^{\lfloor \sfrac{n}{2} \rfloor}
    {n \choose 2\ell}
    \bigg(\frac{z}{2} + 1\bigg)^{n - 2\ell}\bigg(1 + \frac{z^2}{4}\bigg)^{\ell} + \dfrac{z}{2}\sum_{\ell = 0}^{\lfloor \frac{n-1}{2}\rfloor}
    {n \choose 2\ell + 1}
    \bigg(\frac{z}{2} + 1\bigg)^{n - 2\ell - 1}\bigg(1 + \frac{z^2}{4}\bigg)^{\ell} \>.
\end{align*}
At this point the difficulty with $\Delta = 0$ has been resolved by continuity. We see that $P_n(z)$ is a polynomial of degree $n$.
\end{proof}

% ============================================================================ %
% A Connection with Compositions                                               %
% ============================================================================ %
\section{A Connection with Compositions}
Consider the case with symbolic entries $t_i$, and subdiagonals $-1$ for convenience with minus signs in the formulae.  For instance, the $5$ by $5$ example upper Hessenberg Toeplitz matrix is
\begin{equation}
    \textbf{M}_5 =  \left[ \begin {array}{ccccc} t_{{1}}&t_{{2}}&t_{{3}}&t_{{4}}&t_{{5}}
\\ \noalign{\medskip}-1&t_{{1}}&t_{{2}}&t_{{3}}&t_{{4}}
\\ \noalign{\medskip}0&-1&t_{{1}}&t_{{2}}&t_{{3}}\\ \noalign{\medskip}0
&0&-1&t_{{1}}&t_{{2}}\\ \noalign{\medskip}0&0&0&-1&t_{{1}}\end {array}
 \right] 
\>.
\end{equation}
In this section we consider what happens when we take determinants $P_n(z) = \det(z\mathbf{I} - \textbf{M}_n)$.
Examining $P_0(0)$, $P_1(0)$, $P_2(0)$, $P_3(0)$, and $P_4(0)$, and in particular $P_k(0)$ (i.e. $\det (-\mathbf{M}_k)$) we see that
\begin{align}
    P_0(0) &= 1 \mathrm{\ by\  convention}\\
    P_1(0) &= t_1\\
    P_2(0) &= t_1^2 + t_2\\
    P_3(0) &= t_1^3 + 2t_1t_2 + t_3\\
    P_4(0) &= t_1^4 + 3t_1^2t_2 + 2t_1t_3 + t_2^2 + t_4\>.
\end{align}
One may interpret these (looking at the subscripts) as \textsl{compositions}: $2 = 1 + 1 = 2$; $3 = 1 + 1 + 1 = 1 + 2 = 2 + 1 = 3$; $4 = 1 + 1 + 1 + 1 = 2 + 1 + 1 = 1 + 2 + 1 = 1 + 1 + 2 = 1 + 3 = 3 + 1 = 2 + 2 = 4$. The number of compositions of $n$ is $2^{n-1}$, which we get if all $t_j = 1$.

From the Wikipedia entry on composition (combinatorics), ``a composition of an integer $n$ is a way of writing $n$ as the sum of a sequence of strictly positive integers."

One may interpret the recurrence relation
\begin{equation}
    p_{n, 0} = \sum_{k = 1}^{n}t_kp_{n-k, 0}
\end{equation}
from Proposition~\ref{prop:charpolyrec2_UHT} as saying that to generate a composition of $n$, you get the composition of $n-k$ and then add the number ``$k$" to them; adding these together gives all compositions. For example, when $n = 5$ we have $p_{0, 0} = 1$, $p_{1, 0} = t_1$, $p_{2, 0} = t_1^2 + t_2$, $p_{3, 0} = t_1^3 + 2t_1t_2 + t_3$, and $p_{4, 0} = t_1^{4} + 3t_1^2t_2 + 2t_1t_3 + t_2^2 + t_4$. Then
\begin{align*}
    p_{5, 0} &= t_1 p_{4, 0} + t_2p_{3, 0} + t_3 p_{2, 0} + t_4p_{1, 0} + t_5p_{0, 0} \nonumber\\
    &= t_1^5 + 3t_1^3t_2 + 2t_1^2t_3 + t_1t_2^2 + t_1t_4 + t_2t_1^3 + 2t_1t_2^2 + t_2t_3 + t_1^2t_3 + t_2t_3 + t_4t_1 + t_5 \nonumber \\
    &= t_1^5 + 4t_1^3t_2 + 3t_1^2t_3 + 3t_1t_2^2 + 2t_1t_4 + 2t_2t_3 + t_5 \>.
\end{align*}

\begin{remark}
This determinant also contains the whole characteristic polynomial. Simply replace $t$, with $t_1 - z$ and we get $\det \left(\mathbf{M}_n - z\mathbf{I}\right) = (-1)^{n}P_n$. This suggests that ``compositions with all parts bigger than 1" can be used to generate all compositions. This fact is well-known.
{The combinatorial analysis of this recurrence formula is not quite trivial.}
\end{remark}

\section{Concluding Remarks}
The class of upper Hessenberg Bohemian matrices, and the much smaller class of Bohemian upper Hessenberg Toeplitz matrices, give a useful way to study Bohemian matrices in general. This is an instance of Polya's adage ``find a useful specialization."~\cite[p.~190]{polya2014solve} Because these classes are simpler than the general case, we were able to establish several theorems.

In this paper we have introduced two new formulae for computing the characteristic polynomials of upper Hessenberg Toeplitz matrices. Our first formula, Proposition~\ref{prop:charpolyrec1_UHT}, computes the characteristic polynomials recursively. Our second formula, Proposition~\ref{prop:charpolyrec2_UHT}, computes the coefficients recursively. Finally, we show the number of upper Hessenberg Toeplitz matrices of maximal characteristic height which is at least $2^n$ and we conjecture $\mathcal{O}((1 + \varphi)^{n})$ in Theorem~\ref{thm:count_max_height_uht}.

Many puzzles remain. Perhaps the most striking is the angular appearance of the set of eigenvalues $\mathbf{\Lambda}(\mathbf{M}_n)$, such as in Figures~\ref{fig:UHT_14}, and \ref{fig:UHT_14_0_Diag}. General matrices have eigenvalues asymptotic to a (scaled) disc~\cite{tao2017random}; our computations suggest that as $n \to \infty$, $\sfrac{\mathbf{\Lambda}(\mathbf{M}_{n})}{n^{\sfrac{1}{2}}}$ tends to an irregular hexagonal shape, rather than a disk. More, the density does not appear to be approaching uniformity. Further, the boundary is irregular, with shapes suggestive of what is popularly known as the ``dragon curve" (in reverse---these delineate where the eigenvalues are absent, near the edge). We have no explanation for this.

\section*{Acknowledgements}
The calculations and images presented here were in part made possible using AMD Threadripper workstations provided by the Department of Applied Mathematics at Western University.
We acknowledge the support of the Ontario Graduate Institution, The National Science \& Engineering Research Council of Canada, the University of
Alcal\'a, the Rotman Institute of Philosophy, the Ontario Research Centre of
Computer Algebra, and Western University. Part of this work was developed
while R.~M.~Corless was visiting the University of Alcal\'a, in the frame of the
project Giner de los Rios. L.~Gonzalez-Vega, J.~R.~Sendra and J.~Sendra are
partially supported by the Spanish Ministerio de Econom\'\i a y Competitividad
under the Project MTM2017-88796-P.

\bibliographystyle{siamplain}  % siam
\bibliography{bibliography}

\end{document}